\patchcmd\Gread@eps{\@inputcheck#1 }{\@inputcheck"#1"\relax}{}{}
\numberwithin{equation}{section}
\title[An Integral Representation in the Reissner-Nordstr\"om Geometry]{An Integral Representation for the Dirac Propagator in the Reissner-Nordstr\"om Geometry
in Eddington-Finkelstein Coordinates}
\author[F.\ Finster]{Felix Finster}
\author[C.\ Krpoun]{Christoph Krpoun \\ \\ February 2023 / May 2025} 
\address{Fakultät f\"ur Mathematik \\ Universit\"at Regensburg \\ D-93040 Regensburg \\ Germany}
\email{finster@ur.de, christoph.krpoun@mathematik.ur.de}
\newtheorem{Def}{Definition}[section]
\newtheorem{Thm}[Def]{Theorem}
\newtheorem{Prp}[Def]{Proposition}
\newtheorem{Lemma}[Def]{Lemma}
\newtheorem{Remark}[Def]{Remark}
\newcommand{\Thanks}{\vspace*{.5em} \noindent \thanks}
\newcommand{\beq}{\begin{equation}}
\newcommand{\eeq}{\end{equation}}
\newcommand{\Proof}{\begin{proof}}
\newcommand{\QED}{\end{proof} \noindent}
\newcommand{\QEDrem}{\ \hfill $\Diamond$}
\newcommand{\la}{\langle}
\newcommand{\ra}{\rangle}
\newcommand{\bra}{\mathopen{<}}
\newcommand{\ket}{\mathclose{>}}
\newcommand{\Scpr}[2]{(\, #1 \, | \, #2\,)}
\newcommand{\Jost}[1]{\mathcal{J}_{#1}}
\newcommand{\Sl}{\mathopen{\prec}}
\newcommand{\Sr}{\mathclose{\succ}}
\newcommand{\C}{\mathbb{C}}
\newcommand{\R}{\mathbb{R}}
\newcommand{\Z}{\mathbb{Z}}
\newcommand{\N}{\mathbb{N}}
\renewcommand{\H}{\mathscr{H}}
\newcommand{\mm}{\mycal M}
\newcommand{\nn}{\mycal N}
\newcommand{\Am}{\mathcal{A}}
\newcommand{\bb}{{\mathcal{B}}}
\newcommand{\rr}{\mathcal{R}}
\newcommand{\Dir}{{\mathcal{D}}}
\newcommand{\D}{{\mathscr{D}}}
\renewcommand{\L}{{\mathcal{L}}}
\newcommand{\Cisc}{C^\infty_{\text{sc}}}
\newcommand{\Idmat}{\mathbbm{1}}
\newcommand{\dif}[1]{\text{\rm{d}}#1}
\newcommand{\V}{\noindent}
\DeclareMathOperator{\re}{Re}
\DeclareMathOperator{\im}{Im}
\def\bes#1\ees{\begin{align*} #1 \end{align*}}
\def\be#1\ee{\begin{align} #1 \end{align}}
\def\bs{\begin{split}}
\def\es{\end{split}}
\newcommand{\scrM}{\mycal M}
\newcommand{\scrN}{\mycal N}
\DeclareFontFamily{OT1}{rsfso}{}
\DeclareFontShape{OT1}{rsfso}{m}{n}{ <-7> rsfso5 <7-10> rsfso7 <10-> rsfso10}{}
\DeclareMathAlphabet{\mycal}{OT1}{rsfso}{m}{n}
\DeclareFontFamily{U}{mathx}{\hyphenchar\font45}
\DeclareFontShape{U}{mathx}{m}{n}{
      <5> <6> <7> <8> <9> <10>
      <10.95> <12> <14.4> <17.28> <20.74> <24.88>
      mathx10
      }{}
\DeclareSymbolFont{mathx}{U}{mathx}{m}{n}
\DeclareMathAccent{\widecheck}{0}{mathx}{"71}
\DeclareMathAccent{\wideparen}{0}{mathx}{"75}
\DeclarePairedDelimiter\Bra{\langle}{\rvert}
\DeclarePairedDelimiter\Ket{\lvert}{\rangle}
\DeclarePairedDelimiterX\braket[2]{\langle}{\rangle}{#1 \delimsize\vert #2}
\begin{document}
\maketitle

\begin{abstract}
The Cauchy problem for the massive Dirac equation is studied in the Reissner-Nordstr\"om geometry in horizon-penetrating Eddington-Finkelstein-type coordinates. We derive an integral representation for the Dirac propagator involving the solutions of the ordinary differential equations which arise in the separation of variables. Our integral representation describes the dynamics of Dirac particles outside and across the event horizon, up to the Cauchy horizon.
\end{abstract}
\tableofcontents

\section{Introduction} \label{secintro}
The Dirac equation in curved spacetime
describes the dynamics of quantum mechanical waves in the presence of classical gravitational fields.
The propagation of Dirac waves in black hole geometries is of particular interest with respect to
Hawking radiation, the stability of black holes and the fermionic signature operator.
So far, this problem has been studied mainly in the exterior region of the black hole~\cite{tkerr, decay}.
Here we turn attention to the behavior of Dirac waves across and inside the event horizon.
We consider the Reissner-Nordström geometry, which describes a spherical symmetric
and charged black hole.
Our starting point is the Dirac equation in Eddington-Finkelstein coordinates as derived in~\cite{krpoun-mueller}.
Our main result is to derive a corresponding integral representation of the propagator
\[
    \psi(\tau) = e^{-i\tau H} \psi_0 = \int_{\sigma(H)}e^{-i\omega \tau} \:\dif{E}_{\omega} \psi_0 \:,
\]
where~$\dif{E}_{\omega}$ is the spectral measure of the Dirac Hamiltonian~$H$, and $\psi_0$ is smooth
initial data with compact support. We express the spectral measure explicitly in terms of the
fundamental solutions of the radial ordinary differential equation (ODE) arising in Chandrasekhar's separation of variables
(see Theorem~\ref{mainTheo}).
We remark that similar results have been derived previously in the Kerr geometry~\cite{hamilton}.
The novel feature of the present paper is the charge of the black hole. Moreover, we work out the integral representation
in more detail and simplify the formulas considerably.
Our integral representation will be used in a follow-up paper to compute the spectrum of the fermionic signature operator~\cite{sigrn}.

The paper is organized as follows. In Section~\ref{section1} we give
the necessary preliminaries on the Dirac equation
in globally hyperbolic spacetimes and on its separation in the Reissner-Nordstr\"om geometry
in Eddington-Finkelstein-type coordinates.
In Section~\ref{section3} we introduce Dirichlet-type boundary conditions
inside the Cauchy horizon and show that the resulting Hamiltonian is essentially self-adjoint.
Moreover, we express the spectral measure via Stone's formula in terms of the resolvent.
In Section~\ref{secresolvent} the resolvent is computed in terms of the fundamental solutions.
To this end, we construct Jost solutions and use the conservation law for the radial flux in order to
compute the Green's matrix.
In Section~\ref{secintrep} we use the obtained formulas for the resolvent in order to
express the spectral measure explicitly in terms of the fundamental solutions.
This gives the simple and useful analytic expression for the Dirac propagator as stated
in Theorem~\ref{mainTheo}.

\section{Preliminaries \label{section1}}
\subsection{The Dirac Equation in a Globally Hyperbolic Spacetime}
We begin with preliminaries on the Dirac equation in globally hyperbolic spacetimes, following the
presentation in~\cite{finite}.
Thus, let $(\mm, g)$ be a four-dimensional, smooth, globally hyperbolic Lorentzian spin manifold. For the signature of the metric we use the convention~$(+,-,-,-)$.
As proven in~\cite{bernal+sanchez}, $\scrM$ admits a smooth foliation~$(\scrN_\tau)_{\tau \in \R}$
by Cauchy hypersurfaces.
We denote the corresponding spinor bundle by~$S\mm$. Its fibers~$S_x\mm$ are endowed
with an inner product~$\Sl .|. \Sr_x$ of signature~$(2,2)$.
The smooth sections of the spinor bundle are denoted by $C^{\infty}(\mm, S\mm)$.
Likewise, $C_0^{\infty}(\mm, S\mm)$ are the smooth sections with compact support.
We also refer to these sections as wave functions and usually denote them by~$\psi$ or $\phi$. On the
wave functions, we introduce the Lorentz invariant inner product
\begin{align*}
\braket{\, \cdot \,}{\, \cdot \,}  \::\: &C^{\infty}(\mm, S\mm) \times C_0^{\infty}(\mm, S\mm) \longrightarrow \C , \notag \\
        &\braket{\, \psi \,}{\, \phi \,} := \int_{\mm} \Sl \, \psi \, | \, \phi \, \Sr_x\: \dif{\mu}_{\mm}.
\end{align*}
We consider the Dirac equation for a given mass parameter $m \geq 0$. We write the Dirac equation as
\beq \label{direq}
(\mathcal{D} - m ) \psi = 0\:,
\eeq
where the Dirac operator takes the form
\[ \mathcal{D} = iG^{k}\partial_{k} + \bb \, : \, C^{\infty}(\mm, S\mm) \longrightarrow C^{\infty}(\mm, S\mm)\:, \]
and~$G^k : T_x\mm \longrightarrow L(S_x\mm)$ are the Dirac matrices. They fulfill the anti-commutation relations
\[ 
        \{G^j,G^k\} = 2\,g^{jk}\,\Idmat_{S_x \mm}\:. \]
One can understand this map as a representation of the Clifford multiplication in components of the general Dirac matrices. We
will use the Feynman dagger notation reading $\slashed{\nu} = G^j \nu_{j}$. The connection part of the covariant derivative is summarized
in the term $\bb$. We remark that the Dirac equation can be written alternatively as~$\mathcal{D} = i G^j \nabla_j$,
where~$\nabla$ is the Levi-Civita spin connection on~$S\mm$.
For more details on the Dirac equation in curved spacetimes, we refer to~\cite{finite} or~\cite{lawson+michelsohn}.

Given initial data on a Cauchy surface, the Dirac equation admits unique global solutions.
Choosing compactly supported initial data, due to finite propagation speed, the resulting solution
also has compact support on any other Cauchy surface. Such solutions are referred to
as being {\em{spatially compact}}. The smooth, spatially compact solutions are denoted by~$\Cisc(\mm, S\mm)$.
On such solutions, one has the
scalar product
\beq \label{scalPro}
\Scpr{\psi}{\phi} = \int_{\nn} \Sl \, \psi \, | \, \nu^jG_j \,\phi \Sr_x\: \dif{\mu}_{\nn}(x)\:,
\eeq
where $\nu$ is the future directed-normal on $\nn$
(due to current conservation, the scalar product is
in fact independent of the choice of~$\scrN$; for details see~\cite[Section~2]{finite}).
Forming the completion gives the Hilbert space~$(\H, \Scpr{\cdot}{\cdot})$.

In this paper, we restrict attention to {\em{stationary}} spacetimes, meaning that there is a
Killing field~$K$ which is asymptotically timelike (for the general definition see~\cite{oneill}).
We always choose the foliation~$(\scrN_\tau)_{\tau \in \R}$ such that the Killing field is given by~$K=\partial_\tau$.
In this case, it is useful to write spacetime as a product~$\scrM = \R \times \scrN$.
Moreover, we can write the Dirac equation in the Hamiltonian form
\beq \label{schroedingerEq}
        i\partial_{\tau} \psi = H \psi \qquad \text{with} \qquad H := -(G^{\tau})^{-1} \bigg( i
        \sum_{\alpha=1}^3
        G^\alpha \partial_\alpha + \bb - m \bigg) \:,
\eeq
where the Hamiltonian~$H$ is an operator acting on~$\H$ with dense domain
\[ \D(H)= C^\infty_0(\scrN, S\scrM) \:. \]
\begin{Lemma}
In a stationary spacetime, the Hamiltonian~$H$ with domain~$\D(H)$ is a symmetric operator on~$\H$.
\end{Lemma}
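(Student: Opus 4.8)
The plan is to deduce the symmetry of $H$ from the formal self-adjointness of the spacetime Dirac operator $\Dir$, expressed through a current identity, following the framework of~\cite{finite}. The two pointwise ingredients are that the Dirac matrices $G^k$ are symmetric and covariantly constant with respect to $\Sl\,\cdot\,|\,\cdot\,\Sr_x$, and that the spin connection is metric with respect to this inner product. Writing the current $J^k := \Sl\,\psi\,|\,G^k\phi\,\Sr_x$ and using the form $\Dir = iG^k\nabla_k$, these facts give, for arbitrary smooth sections $\psi,\phi$, the divergence identity
\[
  \nabla_k J^k \;=\; i\,\big( \Sl\,(\Dir-m)\,\psi\,|\,\phi\,\Sr_x - \Sl\,\psi\,|\,(\Dir-m)\,\phi\,\Sr_x \big)\:,
\]
where the real scalar mass term cancels in the difference. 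I would track the single factor of~$i$ and the convention for the antilinear slot of $\Sl\,\cdot\,|\,\cdot\,\Sr_x$ carefully, but these do not affect the conclusion.

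Next I would rewrite the scalar product~\eqref{scalPro} in a form adapted to the foliation $\scrM = \R\times\nn$ with $K = \partial_\tau$. Taking $\nn = \{\tau=\text{const}\}$, whose future-directed unit normal $\nu$ is proportional to the conormal $\dif{\tau}$, a standard computation relating the lapse, the induced measure $\dif{\mu}_\nn$ and $\sqrt{|g|}$ yields $\nuslsh\,\dif{\mu}_\nn = G^\tau\,\sqrt{|g|}\,d^3x$, so that
\[
  \Scpr{\psi}{\phi} \;=\; \int_\nn \Sl\,\psi\,|\,G^\tau\phi\,\Sr_x\,\sqrt{|g|}\,d^3x\:.
\]
Since the spacetime is stationary, I would extend the initial data $\psi,\phi\in C_0^\infty(\nn,S\scrM)$ to be $\tau$-independent along the flow of $K$; then $\partial_\tau\psi = \partial_\tau\phi = 0$, while $G^k$ and $\sqrt{|g|}$ are $\tau$-independent. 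On such sections $\Dir$ reduces to its spatial part, and multiplying the definition~\eqref{schroedingerEq} of $H$ by $G^\tau$ gives $G^\tau H = -(\Dir - m)$. Using the $G^\tau$-symmetry of $\Sl\,\cdot\,|\,\cdot\,\Sr_x$ to move $G^\tau$ across, I obtain
\[
  \Scpr{H\psi}{\phi} - \Scpr{\psi}{H\phi}
  = -\int_\nn \big( \Sl\,(\Dir-m)\,\psi\,|\,\phi\,\Sr_x - \Sl\,\psi\,|\,(\Dir-m)\,\phi\,\Sr_x \big)\,\sqrt{|g|}\,d^3x\:.
\]

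Finally, I would apply the divergence identity to the integrand. Because $\psi$ and $\phi$ are $\tau$-independent, the $\tau$-component of $\nabla_k J^k$ drops out, so $\nabla_k J^k = \tfrac{1}{\sqrt{|g|}}\,\partial_\alpha\big(\sqrt{|g|}\,J^\alpha\big)$ is a pure spatial divergence. Hence the integrand above equals $\tfrac{1}{i\,\sqrt{|g|}}\,\partial_\alpha\big(\sqrt{|g|}\,J^\alpha\big)$, and integrating against $\sqrt{|g|}\,d^3x$ gives $\tfrac{1}{i}\int_\nn \partial_\alpha(\sqrt{|g|}\,J^\alpha)\,d^3x$, which vanishes by the divergence theorem since $\psi,\phi$ have compact support. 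Therefore $\Scpr{H\psi}{\phi} = \Scpr{\psi}{H\phi}$, proving symmetry. I expect the main obstacle to be the careful bookkeeping in the normal-and-measure identity $\nuslsh\,\dif{\mu}_\nn = G^\tau\sqrt{|g|}\,d^3x$ together with the factors of $i$, and the justification that the stationary structure allows the $\tau$-independent extension; the underlying current conservation is available from~\cite{finite}.
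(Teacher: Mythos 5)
Your proof is correct, but it takes a genuinely different route from the paper's. The paper's argument is a two-line consequence of facts imported from~\cite{finite}: taking $\psi,\phi$ to be the solutions of the Dirac equation with initial data in~$\D(H)$, the surface-independence of the scalar product~\eqref{scalPro} gives $0=\partial_\tau \Scpr{\psi}{\phi}$; by stationarity only the wave functions depend on~$\tau$, and substituting $\partial_\tau\psi=-iH\psi$ from~\eqref{schroedingerEq} immediately yields $(H\psi|\phi)=(\psi|H\phi)$. You instead avoid the solution theory altogether: you extend the initial data $\tau$-independently (these extensions are \emph{not} solutions), identify $G^\tau H$ with $-(\Dir-m)$ on such sections, and obtain the vanishing of $\Scpr{H\psi}{\phi}-\Scpr{\psi}{H\phi}$ from the pointwise divergence identity for the Dirac current $J^k=\Sl\psi|G^k\phi\Sr_x$ together with the divergence theorem on the slice, where the $\tau$-component of $\nabla_k J^k$ drops out precisely because of the $\tau$-independence of the extensions. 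Both arguments rest on the same structural fact --- the current divergence identity, i.e.\ the formal self-adjointness of~$\Dir$ --- but the paper uses its integrated form along solutions (well-posedness and conservation of the scalar product taken as a black box), whereas you use its pointwise form on static extensions. What your route buys is self-containedness: no appeal to existence of solutions or to the conservation law for~$\Scpr{\cdot}{\cdot}$. What it costs is the explicit geometric bookkeeping (the relation $\nuslsh\,\dif{\mu}_{\nn}=G^\tau\sqrt{|g|}\,d^3x$, symmetry of the~$G^k$ with respect to the spin inner product, metricity of the spin connection), all of which you identify and handle correctly.
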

\begin{proof}
Since the scalar product in \eqref{scalPro} is independent of the choice of the Cauchy surface,
we know that for all~$\psi, \phi \in \D(H)$,
\[ 0 = \partial_{\tau}(\psi | \phi) \:. \]
Since the Dirac matrices $G^k$, as well as the normal vector field $\nu$ and the
volume form do not depend on $\tau$, we only need to differentiate the wave functions.
We thus obtain
\[ 0 = (\partial_{\tau} \psi | \phi ) + (\psi | \partial_{\tau} \phi) = -i \Big( (H\psi| \phi) - (\psi| H \phi) \Big) \:. \]
This concludes the proof.
\end{proof}

\subsection{The Dirac Equation in the Reissner-Nordstr\"om Geometry \label{subsec1}}
We work in Eddington Finkelstein coordinates~$(\tau, r, \vartheta , \varphi)$ in the range
$\R \times (0,\infty) \times (0, \pi) \times [0,2\pi)$ as defined in~\cite{chandra, krpoun-mueller}.
In these coordinates, the metric takes the form
\begin{align}
        g = &\dfrac{\Delta}{r^2}\:\dif{\tau}\otimes \dif{\tau} - \bigg[2 - \dfrac{\Delta}{r^2} \bigg]\dif{r}\otimes\dif{r}
        - \bigg[ 1 - \dfrac{\Delta}{r^2}\bigg]\bigg( \dif{\tau}\otimes\dif{r} + \dif{r} \otimes \dif{\tau}\bigg) \notag \\
                &-r^2 \,\dif{\vartheta} \otimes \dif{\vartheta} - r^2\,\sin(\vartheta)^2 \,\dif{\varphi} \otimes \dif{\varphi}
        \label{RSmetric}
\end{align}
with $\Delta \equiv \Delta(r) = r^2 - 2Mr + Q^2$ and $\tau = t + u - r$. Here $u$ is the
Regge-Wheeler coordinate (``tortoise coordinate'') which is defined in terms of $r$ by
\beq
    \dfrac{\dif{u}}{\dif{r}} = \dfrac{r^2}{\Delta} \:. \label{reggeWheeler}
\eeq
The zeros of the function~$\Delta$ denoted by
\[ r_\pm = M \pm \sqrt{M^2-Q^2} \]
describe the event horizon and Cauchy horizon, respectively. The region~$\{r > r_- \}$
outside the Cauchy horizon is a globally hyperbolic spacetime. The interior of the Cauchy
horizon~$\{ r < r_-\}$, however, is not globally hyperbolic because of the spacetime singularity at~$r=0$.
Our spacetime has the topology $\mm \cong \R^2 \times S^2$.

In~\cite{krpoun-mueller} the Dirac equation was computed and separated in a specific gauge
in which the Dirac matrices are in the Weyl representation.
Starting from this representation, it is most convenient to transform the Dirac wave function and the Dirac operator as
\be
\Psi &= D\, \psi \label{dirtrans} \\
\Gamma_{\text{trafo}} \, D  \,( \Dir - m) \, D^{-1} \,\Psi &= (\rr + \Am) \, \Psi = 0
\ee
with the transformation matrices
\[ 
            D := \dfrac{\sqrt{r}}{r_+}
            \begin{bmatrix}
                |\Delta|^{1/2} & 0 & 0 & 0 \\
                0 & r_+ & 0 & 0\\
                0 & 0 & r_+ & 0\\
                0 & 0 & 0 & |\Delta|^{1/2}
            \end{bmatrix} \:,\qquad
            \Gamma_{\text{trafo}} :=  r
            \begin{bmatrix}
                1 & 0 & 0 & 0 \\
                0 & -1 & 0 & 0\\
                0 & 0 & -1 & 0\\
                0 & 0 & 0 & 1
            \end{bmatrix} \:. \]
Here~$\rr$ and $\Am$ are the radial and angular operators given by
\be
    \label{matricesDirac}
    \rr& := \begin{bmatrix}
        irm & 0 & |\Delta|^{1/2}\D_0& 0 \\
        0 & -irm & 0 &|\Delta|^{-1/2}\D_1 \\
        |\Delta|^{-1/2}\D_1 & 0 & -irm & 0 \\
        0 & |\Delta|^{1/2} \D_0 & 0 & irm \\
    \end{bmatrix} \\[0.3em]
    \Am& := \begin{bmatrix}
        0 & 0 & 0 & \L_+ \\
        0 & 0 & -\L_- & 0 \\
        0 & \L_+ & 0 & 0 \\
        -\L_- & 0 & 0 & 0  \\
    \end{bmatrix} \:,
\ee
where the linear operators $\Dir_{0/1}, \, \L_{\pm}  : C^{\infty}(\mm, S\mm) \longrightarrow C^{\infty}(\mm, S\mm)$
have the from
\be
        \D_0 :=& \,- \big(\partial_{\tau} - \partial_r \big)  \label{operators} \\
        \D_1 :=& \, \big( 2r^2 - \Delta \big)\partial_{\tau} + \Delta\partial_r  \\
        \L_{\pm} :=& \, \partial_{\vartheta} + \dfrac{\cot(\vartheta)}{2} \mp i \csc(\vartheta)\partial_{\varphi} \label{Lops} \:.
\ee
In this formulation, the spin inner product takes the form
\be \label{spininner}
\Sl \psi | \phi \Sr_x = - \la \Psi, \begin{pmatrix} 0 & \Idmat_{\C^2} \\ \Idmat_{\C^2} & 0 \end{pmatrix} \Phi \ra_{\C^4} \:.
\ee

Employing the separation ansatz
\be
    \label{waveFunc}
        \Psi = \, e^{-i(k + \frac{1}{2})\varphi}\dfrac{1}{r_+}
        \begin{bmatrix}
            \quad \: X_+(\tau, r)\: Y_l(\vartheta)_+ \\
            r_+ \,  X_-(\tau, r)\: Y_l(\vartheta)_- \\
            r_+ \, X_-(\tau, r)\: Y_l(\vartheta)_+ \\
            \quad \:X_+(\tau ,r) \:Y_l(\vartheta)_- \\
        \end{bmatrix} \quad \text{with} \quad
        \omega \in \R \, \text{ and } \, k, \, l\, \in \Z  \:,
\ee
we obtain the eigenvalue problems
\[ 
        \rr  \Psi = \xi \Psi \quad \text{and} \quad \Am \Psi = - \xi \Psi \]
with a separation constant~$\xi$. In this way, the Dirac equation decouples into a radial and angular ODE
of the form
\be
    \label{radialPDE}
    \begin{bmatrix}
        (2r^2 - \Delta) \, \partial_{\tau} + \Delta \partial_r & |\Delta|^{1/2}(imr - \xi) \\
        -\epsilon(\Delta)|\Delta|^{1/2}(imr + \xi) & - \Delta(\partial_{\tau} - \partial_r)
    \end{bmatrix}
    \begin{pmatrix}
        X_+(\tau, r) \\[0.2em]
        r_+ \, X_-(\tau, r)
    \end{pmatrix} &= 0 \\[0.2em]
    \bigg( \begin{bmatrix}
         0 & \L_- \\
         -\L_+ & 0
    \end{bmatrix} - \xi \, \Idmat_{\C^4} \bigg)
    \begin{pmatrix}
        Y_l(\vartheta)_+ \\[0.2em]
        Y_l(\vartheta)_-
    \end{pmatrix} &= 0 \: \label{angularODE}
\ee
where
\be \label{epsdef}
    \epsilon(x) =
        \begin{cases}
            1 \, &\text{if} \quad x > 0\\
            0 \, &\text{if} \quad x = 0\\
            -1 \, &\text{if} \quad x < 0
        \end{cases}
\ee
is the sign function. In view of~\eqref{Lops}, the angular operator in~\eqref{angularODE} does not involve~$\tau$-derivatives
(note that this is no longer the case in the Kerr geometry).
This angular operator is an essentially selfadjoint operator on~$L^2(S^2, \C^2)$ with
dense domain~$C^\infty(S^2, \C^2)$, having a purely discrete spectrum
(for details see~\cite[Section~3 and Appendix~A]{tkerr}). More specifically,
the angular operator is the spin-weighted spherical operator for~$s=\frac{1}{2}$ as analyzed in~\cite{goldberg}.
We denote the corresponding orthonormal eigenvector basis
by $_{\frac{1}{2}} Y_{kl} = e^{-i(k + \frac{1}{2})\varphi} \,Y_{kl}(\vartheta)$ with $l,k \in \Z$, i.e.\
\be \label{angularON}
        \big\la e^{-i(k+\frac{1}{2})\varphi}\, Y_{kl}(\vartheta),e^{-i(k'+\frac{1}{2})\varphi}\, Y_{k'l'}(\vartheta) \big\ra_{L^2(S^2)}
        = \delta_{k,k'}\:\delta_{l,l'} \:.
\ee
Restricting attention to one angular momentum mode,
it suffices to solve the PDE~\eqref{radialPDE} in~$\tau$ and~$r$ for~$\xi=\lambda_{kl}$,
where~$\lambda_{kl}$ denotes the corresponding angular eigenvalue.
Since the angular eigenfunctions are orthonormal, the general Cauchy problem can be solved
by decomposing the initial data into angular momentum modes, solving the PDE~\eqref{radialPDE}
for each mode and taking the superposition.

Similar to~\eqref{schroedingerEq}, the PDE~\eqref{radialPDE} can again be written in the
Hamiltonian form:
\begin{Lemma}
\label{hamilton}
The radial equation~\eqref{radialPDE} can be written in the Schrödinger-type form
    \beq \label{2dirac}
    i \partial_{\tau} X(\tau, r) = H_\xi\, X(\tau, r)
    \eeq
    with the Hamiltonian~$H_\xi$ given by
\[ H_\xi := \underbrace{\begin{bmatrix}
- \frac{i}{(2r^2 - \Delta)} & 0 \\
            0 & \frac{i}{\Delta} \\
        \end{bmatrix}}_{=: \,C^{-1}(r) \, \in \, M(2,\C)}
        \:
        \begin{bmatrix}
            \Delta \partial_r & |\Delta|^{1/2}(imr - \xi) \\
            - \epsilon(\Delta)|\Delta|^{1/2}(imr + \xi) & \Delta \partial_r
        \end{bmatrix} \:, \]
    where $X(\tau, r)$ is the radial part of the wave function from \eqref{waveFunc}.
\end{Lemma}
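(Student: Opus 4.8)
The plan is to read off the Schrödinger form directly from~\eqref{radialPDE} by isolating the time derivative. The key structural observation is that in the matrix operator appearing in~\eqref{radialPDE} the derivative~$\partial_\tau$ occurs \emph{only} on the diagonal: the top-left entry contributes~$(2r^2-\Delta)\,\partial_\tau$, while the bottom-right entry~$-\Delta(\partial_\tau-\partial_r)$ contributes~$-\Delta\,\partial_\tau$. All remaining terms involve either~$\partial_r$ (again only on the diagonal, with coefficient~$\Delta$ in both entries) or are of zeroth order (the off-diagonal mass-and-separation terms). I would therefore split the operator in~\eqref{radialPDE} as
\[
\widetilde{C}(r)\,\partial_\tau + R(r), \qquad
\widetilde{C}(r) := \begin{bmatrix} 2r^2-\Delta & 0 \\ 0 & -\Delta \end{bmatrix},
\]
where~$R(r)$ collects the~$\partial_r$-terms and the zeroth-order terms,
\[
R(r) := \begin{bmatrix}
\Delta\,\partial_r & |\Delta|^{1/2}(imr-\xi) \\
-\epsilon(\Delta)\,|\Delta|^{1/2}(imr+\xi) & \Delta\,\partial_r
\end{bmatrix},
\]
so that~\eqref{radialPDE} reads~$\widetilde{C}(r)\,\partial_\tau X + R(r)\,X = 0$.

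Next I would invert the diagonal matrix~$\widetilde{C}(r)$ to solve for~$\partial_\tau X$, obtaining~$\partial_\tau X = -\widetilde{C}(r)^{-1} R(r)\,X$, and multiply by~$i$ to arrive at~$i\partial_\tau X = -i\,\widetilde{C}(r)^{-1} R(r)\,X$. It then only remains to identify the prefactor: since
\[
-i\,\widetilde{C}(r)^{-1} = \begin{bmatrix} -\tfrac{i}{2r^2-\Delta} & 0 \\ 0 & \tfrac{i}{\Delta} \end{bmatrix} = C^{-1}(r),
\]
the operator~$-i\,\widetilde{C}(r)^{-1}R(r)$ is precisely~$C^{-1}(r)\,R(r) = H_\xi$ as stated, with the factor~$i$ of the Schrödinger equation absorbed into the entries of~$C^{-1}(r)$. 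As a consistency check I would reverse the inversion by left-multiplying the claimed identity~$i\partial_\tau X = H_\xi X$ with~$C(r)$ and verify row by row that the two scalar equations reproduce exactly the two rows of~\eqref{radialPDE}, paying attention to the signs carried by~$\epsilon(\Delta)$ and by the expansion of~$-\Delta(\partial_\tau-\partial_r)$.

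Since the argument is a purely algebraic rearrangement, there is no genuine analytic difficulty; the only point requiring care is the invertibility of~$\widetilde{C}(r)$, which holds away from the zeros of its diagonal entries. The entry~$\Delta$ vanishes exactly at the horizons~$r=r_\pm$, producing the familiar~$1/\Delta$ coordinate singularity of the Hamiltonian that is treated separately when self-adjointness and boundary conditions are discussed. The other entry is harmless on the relevant domain: writing~$2r^2-\Delta = r^2+2Mr-Q^2$ and using~$\Delta(r_-)=0$ (so that~$2Mr_- - Q^2 = r_-^2$) one finds this expression equals~$2r_-^2>0$ at~$r=r_-$, and since it is strictly increasing in~$r$ it stays positive throughout~$\{r> r_-\}$. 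Hence the rewriting is valid on the region of interest, and no further estimates are needed.
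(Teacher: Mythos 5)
Your proposal is correct and matches the paper's proof, which simply states ``follows by direct computation'': you carry out exactly that computation, splitting the operator in~\eqref{radialPDE} into its $\partial_\tau$-part $\widetilde{C}(r)$ and the remainder $R(r)$, inverting the diagonal factor, and identifying $-i\,\widetilde{C}(r)^{-1}R(r)$ with $C^{-1}(r)\,R(r)=H_\xi$. Your added remark on the invertibility of $\widetilde{C}(r)$ (zeros of $\Delta$ at the horizons, positivity of $2r^2-\Delta$ on the relevant region) is a sound sanity check consistent with the paper's later choice $r_0>r_{\min}$, where $r_{\min}$ is precisely the zero of $2r^2-\Delta$.
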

\begin{proof}
Follows by direct computation.
\end{proof} \noindent
The domain of the radial Hamiltonian in~\eqref{2dirac} will be discussed after Lemma~\ref{lemma-chernoff}
below.

Since $\partial_\tau$ is a Killing field we can write $X(\tau, r) = e^{-i\omega \tau} X^{(\omega)}(r)$ and get a
first-order radial ODE given by
\be
    \label{radialODE2}
        \begin{bmatrix}
            -(2r^2 - \Delta)i \omega + \Delta \partial_r & |\Delta|^{1/2}(imr - \xi) \\
            -\epsilon(\Delta)\,|\Delta|^{1/2}(imr + \xi) & \Delta(i\omega + \partial_r)
        \end{bmatrix}
        \begin{pmatrix}
            X_+^{(\omega)}(r) \\[0.2em]
            r_+ \, X_-^{(\omega)}(r)
        \end{pmatrix} &= 0 \: .
\ee

More details on the previous steps and the asymptotics of the solutions to the radial ODE \eqref{radialODE2}
are worked out in \cite{krpoun-mueller}. We here state the main results,
which can be obtained from~\cite[Theorem~1.1]{krpoun-mueller} by setting the angular momentum~$a$
equal to zero.
\begin{Lemma}
    \label{krpMul}
In the case $ |\omega| < m$, one solution of the radial ODE~\eqref{radialODE2} has exponential decay and the other one exponential growth
    for $u \rightarrow \infty$.

In the case $|\omega| > m$, on the other hand, the solutions have following asymptotics:
\begin{enumerate}[leftmargin=2em]
\item[{\rm{(i)}}] {\bf (Asymptotics at infinity)} Let $w_1 \in \C$ be the root of $\omega^2 - m^2$ contained in the convex hull of $\R_+ $ and $\R_+ \cdot i$ and $w_2 = -w_1$
        the other root, and let
\[ \Theta := \frac{1}{4}\, \ln \Big( \frac{\omega - m}{\omega + m} \Big) \:, \]
then there is~$f_{\infty} := (f_{\infty}^{(1)}, f_{\infty}^{(2)})^T \in\R^2 \setminus \{ 0\} $ with
\[ 
        X^{(\omega)}(u)
        = \begin{bmatrix}
            \cosh(\Theta) & \sinh(\Theta) \\
            \sinh(\Theta) & \cosh(\Theta) \\
        \end{bmatrix}	 \begin{pmatrix}
            f_{\infty}^{(1)}e^{i\Phi_+(u)} \\
            f_{\infty}^{(2)}e^{-i\Phi_-(u)} \\
        \end{pmatrix}
        + E_{\infty}(u)  \]
        \V for the asymptotic phases
        \be
        \Phi_{\pm}(u):= w_{1}\, u + M \bigg( \pm 2\omega + \dfrac{m^2}{w_{1}}\bigg) \ln(u)
        \label{asyPhase}
        \ee
        and for an error function $E_{\infty}(u)$ with polynomial decay. More precisely, there is $c \in \R_+$ with
        \bes
        ||E_{\infty}|| \leqslant \dfrac{c}{u}.
        \ees
\item[{\rm{(ii)}}] {\bf (Asymptotics at the Cauchy horizon)} For every non-trivial solution $X$,
\[ X^{(\omega)}(u) =
        \begin{pmatrix}
            h_{r_-}^{(1)}e^{2i \omega u} \\
            h_{r_-}^{(2)}
        \end{pmatrix}
        + E_{r_-}(u)
\]
        with~$h_{r_-} := (h_{r_-}^{(1)}, h_{r_-}^{(2)})^T \in \R^2 \setminus \{0\} $, with $E_{r_-}$ such that for $r$ sufficiently close to $r_-$ and suitable constants $a,b \in \R_+$,
        \bes
        ||E_{-}(u)|| \leq a e^{- b u}.
        \ees
\end{enumerate}
\end{Lemma}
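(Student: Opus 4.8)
The plan is to reduce the radial ODE~\eqref{radialODE2} to a first-order linear system with asymptotically constant coefficients and then to apply asymptotic integration (Levinson-type theory) at each end of the globally hyperbolic region. Passing to the tortoise coordinate~$u$ from~\eqref{reggeWheeler} via the relation~$\Delta\,\partial_r = r^2\,\partial_u$ and dividing by~$r^2$ turns~\eqref{radialODE2} into the form~$\partial_u X = A(u)\,X$, where~$A(u)$ has well-defined limits both as~$r\to\infty$ (spatial infinity) and as~$r\to r_-$ (the Cauchy horizon in the interior region). First I would record the limiting matrix at infinity: since~$\Delta/r^2\to 1$ and~$|\Delta|^{1/2}/r\to 1$, one finds
\[
A_\infty := \lim_{u\to\infty} A(u) = \begin{pmatrix} i\omega & -im \\ im & -i\omega \end{pmatrix},
\]
whose eigenvalues are~$\pm\sqrt{m^2-\omega^2}$. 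For~$|\omega|<m$ these are real and of opposite sign, which already yields the first assertion of one exponentially decreasing and one exponentially increasing solution. For~$|\omega|>m$ they equal~$\pm i\,w_1$ with~$w_1=\sqrt{\omega^2-m^2}$, producing oscillatory behaviour.

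Next I would diagonalize~$A_\infty$ in the oscillatory regime. Because the mass entries couple the two components off-diagonally, the diagonalizing transformation is a hyperbolic (Bogoliubov-type) rotation, and a direct computation identifies it with the matrix built from~$\cosh\Theta$ and~$\sinh\Theta$ for~$\Theta=\tfrac14\ln\!\big(\tfrac{\omega-m}{\omega+m}\big)$ appearing in the statement; this is precisely the transformation bringing~$A_\infty$ to~$\diag(i w_1,-i w_1)$. The genuinely delicate step is controlling the subleading behaviour: expanding~$\Delta=r^2-2Mr+Q^2$ shows that~$A(u)-A_\infty$ decays only like~$1/u$, a long-range Coulomb-type tail coming from the~$M/r$ correction, so Levinson's theorem cannot be applied directly. \emph{This long-range tail is the main obstacle.} I would handle it by a preliminary phase transformation absorbing the diagonal part of the~$1/u$ perturbation into a logarithmic phase; integrating the diagonal entries of the perturbed system in the eigenbasis produces exactly the terms~$M\big(\pm 2\omega \pm m^2/w_{1,2}\big)\ln u$ of~$\Phi_\pm$ in~\eqref{asyPhase}. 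After this modification the residual perturbation is of order~$1/u^2$, hence integrable, and a standard variation-of-constants estimate yields a solution of the claimed form with remainder~$\|E_\infty\|\le c/u$; the reality of~$f_\infty$ then follows from the reality structure of the ODE.

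Finally, for the Cauchy horizon I would exploit that~$u\to\infty$ there as well, with~$\Delta\to 0$ exponentially fast in~$u$ (since~$u$ behaves like a multiple of~$\ln|r-r_-|$ near the simple zero~$r_-$ of~$\Delta$, so that~$\Delta$ decays like~$e^{2\kappa_- u}$ with negative surface gravity~$\kappa_-$). Consequently the off-diagonal entries of~$A(u)$, which all carry the factor~$|\Delta|^{1/2}$, are exponentially small, and~$A(u)$ tends to~$\diag(2i\omega,0)$. This produces precisely the two modes~$e^{2i\omega u}$ and a constant in the statement. Since the perturbation now decays like~$e^{-bu}$, a contraction-mapping argument (equivalently, Levinson's theorem for exponentially small perturbations) gives the exponential error bound~$\|E_-(u)\|\le a\,e^{-bu}$. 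The whole argument is the~$a=0$ specialization of~\cite[Theorem~1.1]{krpoun-mueller}, where the analogous analysis is carried out in the rotating, charged setting; the only essential difficulty beyond bookkeeping is the Coulomb phase at spatial infinity discussed above.
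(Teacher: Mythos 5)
Your proposal is correct in substance, but it does far more work than the paper itself: the paper gives no in-house proof of this lemma at all --- it simply imports the result, remarking that it ``can be obtained from~\cite[Theorem~1.1]{krpoun-mueller} by setting the angular momentum~$a$ equal to zero.'' What you have sketched is, in effect, a reconstruction of the proof of that cited theorem: rewriting~\eqref{radialODE2} via~\eqref{reggeWheeler} as $\partial_u X = A(u)X$, computing the limit matrix $A_\infty$ at infinity (eigenvalues $\pm\sqrt{m^2-\omega^2}$, giving the dichotomy for $|\omega|<m$ and oscillation for $|\omega|>m$, diagonalized by the hyperbolic rotation with parameter $\Theta$) and the limit $\diag(2i\omega,0)$ at the Cauchy horizon, absorbing the long-range $1/u$ mass/Coulomb tail into the logarithmic phases of~\eqref{asyPhase}, and then running Levinson-type asymptotic integration with an integrable ($O(1/u^2)$) remainder at infinity and an exponentially small one at the horizon. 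This reproduces both the stated asymptotics and the error bounds, and it correctly isolates the one genuinely delicate point, namely that Levinson's theorem cannot be applied before the logarithmic phase correction. Two minor caveats: for $|\omega|<m$ the claim of one exponentially decaying and one exponentially growing solution still needs the standard argument that an $O(1/u)$ perturbation preserves the exponential dichotomy (Levinson or Hartman--Wintner), and the reality of $f_\infty$ and $h_{r_-}$ deserves more than an appeal to the ``reality structure'' of the ODE (as stated it is essentially a normalization inherited from the cited theorem). Neither is a serious gap; your route buys a self-contained argument where the paper offers only a citation, at the cost of redoing the analysis of~\cite{krpoun-mueller}.
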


\section{Functional Analytic Preparations} \label{section3}
In this section we bring the scalar product~\eqref{scalPro} into a more explicit form.
Moreover, we set up the Cauchy problem in a way where spectral methods in Hilbert space become applicable.
In order to obtain a unitary time evolution, we must consider a spacetime region
which includes the Cauchy horizon, and we must introduce reflecting boundary conditions
on the timelike surface~$r=r_0<r_-$ (our solution of the Cauchy problem outside the Cauchy horizon
will not depend on the choice of~$r_0$, as will be explained after Lemma~\ref{lemmaResolvente}).
Moreover, we must make sure that the surfaces~$\{\tau = \text{const}\}$
are spacelike. Noting that the metric coefficient~$g_{rr}$ in~\eqref{RSmetric} has a zero at
\[ r_{\min} = \sqrt{M^2 + Q^2} - M < r_- \:, \]
we are led to choosing~$r_0$ in the interval
\beq \label{r0range}
r_{\min} < r_0 < r_- \:.
\eeq
We thus consider the spacetime region~$M \subset \mm $ defined by
\[ M := \{\tau, r> r_0, \vartheta,\varphi \} \quad \text{with timelike boundary} \quad \partial M := \{\tau, r= r_0,\vartheta, \varphi\} \:. \]
This spacetime is foliated by the space-like hypersurfaces~$(N_{\tau})_{\tau \in \R}$ given by
\bes
N_{\tau} := \{\tau = \text{const.}\, , r \geq r_0, \vartheta , \varphi\} \:.
\ees
Each hypersurface has the boundary~$\partial N_{\tau} = \partial M \cap N_{\tau} \simeq S^2$.
These $N_{\tau} \subset \nn_{\tau}$ give rise to a space-like foliation of $M$.
The Killing field~$K=\partial_\tau$ is tangential to~$\partial M$ and is timelike on this hypersurface
(because~$g(K, K) = g_{00} = \Delta(r_0)/r^2 >0$).
We denote the corresponding spinor bundle by~$SM$.

We first compute the scalar product in this spacetime region.
\begin{Lemma} \label{lemmasprod} The scalar product~\eqref{scalPro} can be written as
\be
    \label{skalarprodukt}
    \Scpr{\psi}{\phi} = \int_{r_0}^{\infty} \dif{r} \int_{-\pi/2}^{\pi/2} \dif{\vartheta} \int_0^{2\pi}
    \dif{\varphi}\;\Psi^{\dagger} \, \Gamma \,\Phi
    \sin(\vartheta)
\ee
(where the capital letters always denote the transformed Dirac wave function~\eqref{dirtrans}) with
\beq \label{scalMat}
    \Gamma = \dfrac{r_+}{|\Delta|}
        \begin{bmatrix}
            2r^2 - \Delta & 0 & 0 & 0 \\
            0 & |\Delta| & 0 & 0 \\
            0 & 0 & |\Delta| & 0 \\
            0 & 0 & 0 & 2r^2 - \Delta \\
        \end{bmatrix} .
\eeq
\end{Lemma}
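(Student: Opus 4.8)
The plan is to evaluate the geometric flux \eqref{scalPro} on a single slice $N_\tau=\{\tau=\text{const}\}$ and then rewrite the result through the transformed spinors using \eqref{spininner}. Since $\Scpr{\psi}{\phi}$ does not depend on the slice, and since the choice \eqref{r0range} makes every $N_\tau$ spacelike, this is legitimate.

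First I would compute the future-directed unit normal and the induced measure. The conormal of $N_\tau$ is proportional to $\dif{\tau}$, so I need $g^{\tau\tau}$. The $(\tau,r)$-block of \eqref{RSmetric} has determinant $-1$, and inverting it gives $g^{\tau\tau}=(2r^2-\Delta)/r^2$, which is positive precisely for $r>r_{\min}$; this re-confirms that the slices are spacelike. Hence $\nu^jG_j=\nu_\tau G^\tau$ with $\nu_\tau=r\,(2r^2-\Delta)^{-1/2}$. For the measure, the induced metric on $N_\tau$ has determinant of modulus $(2r^2-\Delta)\,r^2\sin^2\vartheta$, so that $\dif{\mu}_{N_\tau}=r\,(2r^2-\Delta)^{1/2}\sin\vartheta\,\dif{r}\,\dif{\vartheta}\,\dif{\varphi}$. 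The decisive simplification is that the factor $(2r^2-\Delta)^{1/2}$ in the measure cancels the factor $(2r^2-\Delta)^{-1/2}$ in the normalization of $\nu$, leaving the clean integrand $r^2\sin\vartheta\,\Sl\psi\,|\,G^\tau\phi\Sr_x$. Equivalently, this is the general identity that the flux across $\{\tau=\text{const}\}$ equals $\int \Sl\psi\,|\,G^\tau\phi\Sr_x\,\sqrt{|g|}$, with $\sqrt{|g|}=r^2\sin\vartheta$ coming from $\det g=-r^4\sin^2\vartheta$.

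It then remains to express $\Sl\psi\,|\,G^\tau\phi\Sr_x$ through $\Psi$ and $\Phi$. Here I would use that in the Weyl representation the spin inner product is given by the constant off-diagonal block matrix of \eqref{spininner}, so that $\Sl\psi\,|\,G^\tau\phi\Sr_x$ becomes a bilinear expression into which I substitute the transformation \eqref{dirtrans}. The Dirac matrix $G^\tau$ itself need not be recomputed from scratch: its transform $D\,G^\tau D^{-1}$ can be read off from the coefficient of $\partial_\tau$ in the radial operator $\rr$ of \eqref{matricesDirac} (via $\D_0,\D_1$ of \eqref{operators}, after stripping $\Gamma_{\text{trafo}}$ and the factor $i$ from $iG^k\partial_k$), and the Clifford identity $(D\,G^\tau D^{-1})^2=g^{\tau\tau}\,\Idmat$ is a convenient check. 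Collecting the leftover $r^2$ together with the powers of $|\Delta|^{1/2}$, $\sqrt r$ and $r_+$ carried by $D$ then produces the diagonal matrix \eqref{scalMat}, which is manifestly Hermitian and, on the range \eqref{r0range} where $2r^2-\Delta>0$ and $\Delta\neq 0$, positive definite, as a scalar-product weight must be.

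The main obstacle is the bookkeeping forced by the \emph{non-unitarity} of $D$. Since $D$ is not an isometry of the spin inner product, the weight $\Gamma$ is sensitive to exactly how many times and in which slot the factors of $D$ enter: transporting $G^\tau$ into the transformed frame and transporting the spinors must be done consistently, or one obtains a spurious overall factor (for instance $r/|\Delta|^{1/2}$ instead of the correct $r_+/|\Delta|$). Getting this single prefactor right, so that the $|\Delta|^{1/2}$ powers combine into $|\Delta|$ and the $r$ powers collapse to the constant $r_+$, is the only genuinely delicate step; the determinant computation and the matrix multiplications are otherwise routine.
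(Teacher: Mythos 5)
Your proposal is correct and follows essentially the same route as the paper: evaluate the slice-independent flux with the normal co-vector $\nu = \frac{r}{\sqrt{2r^2-\Delta}}\,\dif{\tau}$ and induced measure $\sqrt{2r^2-\Delta}\,r\sin\vartheta$, read off the transformed $G^\tau$ from the $\partial_\tau$-coefficient of \eqref{matricesDirac} after stripping $\Gamma_{\text{trafo}}$, and collect the factors of $D$ into the weight \eqref{scalMat}. Your emphasis on the non-unitarity of $D$ is exactly the crux also of the paper's computation, whose formula $\Gamma = D^{-1}\gamma^0 G^\tau D^{-1} r^2$ carries the two explicit factors of $D^{-1}$ (i.e., the constant matrix in \eqref{spininner} is to be read as acting on the original Weyl-frame spinors) that convert the spurious prefactor $r/|\Delta|^{1/2}$ you mention into the correct $r_+/|\Delta|$.
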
 \noindent
Note that, in view of the lower bound in~\eqref{r0range}, the matrix~$\Gamma$ is strictly positive,
showing that~\eqref{skalarprodukt} is indeed a scalar product.
\begin{proof}
By direct computation, we see that the volume form is
\[ \sqrt{|\text{det}g_{\nn_{\tau}}|} = \sqrt{(2r^2- \Delta)} \:r\,\sin{\vartheta} \:. \]
It remains to compute the combination~$G^j \nu_j$.
The normal~$\nu$ is determined by the four equations
\[ g(\nu,\nu) = 1, \quad g(\nu, \partial_r) = 0, \quad g(\nu, \partial_{\varphi}) = 0 \quad \text{and} \quad g(\nu, \partial_{\vartheta}) = 0 \:. \]
By direct computation, we find
\[ \nu = -i \dfrac{\sqrt{ \Delta - 2r^2}}{r} \partial_{\tau} - i \dfrac{\Delta + r^2}{r \sqrt{\Delta - 2r^2}} \partial_{r} \:. \]
This corresponds to the co-vector
\[ \nu = \dfrac{r}{\sqrt{2r^2 - \Delta}}\: \dif{\tau} \:. \]
Reading off the transformed Gamma-matrix $\overline{G^{\tau}}$ from \eqref{matricesDirac},
we need to transform it back with the relation
\[
    G^\tau = D^{-1}\Gamma_{\rm{trafo}}^{-1}\overline{G^\tau}D \, .
\]
Using the form of the spin inner product~\eqref{spininner}, we obtain
\[ \Sl \psi | \slashed{\nu}\, \phi \Sr \sqrt{|\text{det}g_{\nn_{\tau}}|}
= \Psi^{\dagger} \underbrace{D^{-1} \gamma^0 G^{\tau} D^{-1} r^2}_{=: \Gamma} \Phi \sin(\vartheta)
=  \Psi^{\dagger} \Gamma \Phi\sin(\vartheta) \:, \]
concluding the proof.
\end{proof}

In order to obtain a Cauchy problem with a well-defined, unitary time evolution, we
need to introduce suitable boundary conditions at~$r=r_0$. Following the procedure in~\cite{chernoff},
we introduce the reflecting boundary conditions
\[ 
(\slashed{n} - i)\,\psi\big|_{\partial M} = 0 \:, \]
where $\slashed{n}$ is the inner normal on $\partial M$.
For the Cauchy problem, we set~$N = N_\tau|_{\tau=0}$. We choose initial data in the class
\be
    \label{boundary}
C^\infty_\text{init}(N) := \big\{ \psi_0 \in C^{\infty}_0(N \, , S M) \text{ with }
(\slashed{n} - i)\,(H^p\,\psi_0)\big|_{\partial N} = 0 \text{ for all~$p \in \N$} \big\} \:.
\ee
We denote the Hilbert space generated by these functions
(with the scalar product computed in Lemma~\ref{lemmasprod}) by~$\H_N$.
The following lemma was proved in~\cite{chernoff}.

\begin{Lemma} \label{schroedingerPrep}
For initial data~$\psi_0$ in the class~\eqref{boundary}, the Cauchy problem with boundary conditions
\[  i\partial_{\tau} \psi = H \, \psi\:,\qquad \psi|_{N} = \psi_0\:, \qquad
(\slashed{n} - i)\,\psi\big|_{\partial M} = 0 \]
has a unique, global solution $\psi \in \Cisc(M, S M)$. Evaluating this solution at subsequent times $\tau$ and $\tau'$ gives rise to a unique unitary time evolution operator leaving the domain~$C^\infty_\text{\rm{init}}(N)$ invariant, i.e.\
\bes
        U^{\tau'; \tau} : C^\infty_\text{\rm{init}}(N) \subset \H_N \longrightarrow C^\infty_\text{\rm{init}}(N) \subset \H_N \:.
\ees
\end{Lemma}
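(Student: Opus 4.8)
The plan is to reduce the statement to the essential self-adjointness of the Hamiltonian~$H$ on the domain~$C^\infty_{\text{init}}(N)$ and then to invoke Stone's theorem, following Chernoff's treatment of symmetric hyperbolic systems~\cite{chernoff}. First I would verify that~$H$ is symmetric with respect to the scalar product of Lemma~\ref{lemmasprod}. Computing~$\Scpr{H\psi}{\phi} - \Scpr{\psi}{H\phi}$ for~$\psi,\phi \in C^\infty_{\text{init}}(N)$ and integrating by parts in~$r$, one picks up a boundary integral over~$\partial N \simeq S^2$ at~$r=r_0$ in addition to the bulk terms that cancel exactly as in the symmetry lemma for the full spacetime. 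The reflecting condition~$(\slashed{n}-i)\psi|_{\partial M}=0$ is tailored precisely so that this boundary integral vanishes: it projects~$\psi$ onto an eigenspace of the boundary Clifford multiplication~$\slashed{n}$, and on this subspace the boundary sesquilinear form associated with the integration by parts is identically zero. This is the curved-spacetime analogue of the MIT bag boundary condition.

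Next I would establish essential self-adjointness by Chernoff's method for first-order symmetric hyperbolic operators. The Dirac operator is symmetric hyperbolic, so its characteristics are the light cones of the metric~\eqref{RSmetric} and the associated dynamics propagates with finite speed. The crucial geometric input is that~$\partial M$ is timelike (because~$g(K,K)>0$ for~$r_0$ in the range~\eqref{r0range}), so the reflecting condition turns~$\partial M$ into a perfect mirror: waves emanating from compactly supported data reflect back into~$M$ and the norm is exactly conserved at the boundary. Hence the evolution exists for all~$\tau$, maps~$C^\infty_0$-type data to spatially compact smooth solutions, and Chernoff's theorem yields that~$\overline{H}$ is self-adjoint (indeed all powers~$H^p$ are essentially self-adjoint on this class).

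With~$\overline{H}$ self-adjoint, Stone's theorem provides the strongly continuous unitary group~$U^{\tau';\tau}=e^{-i(\tau'-\tau)\overline{H}}$, where the dependence only on~$\tau'-\tau$ reflects the stationarity of the metric and of the boundary. Unitarity is exactly conservation of the scalar product~\eqref{scalPro}. To see that the domain~$C^\infty_{\text{init}}(N)$ is invariant I would argue inductively: finite propagation speed keeps supports compact, interior hyperbolic regularity keeps the solutions smooth, and since~$\overline{H}$ commutes with the flow one has~$(\slashed{n}-i)(H^p\, U^{\tau';\tau}\psi_0)\big|_{\partial N} = (\slashed{n}-i)(U^{\tau';\tau}\, H^p\psi_0)\big|_{\partial N}$, which vanishes precisely by the all-orders compatibility conditions built into the definition~\eqref{boundary}. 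The regularity~$\psi\in\Cisc(M,SM)$ then follows from standard well-posedness and propagation of smoothness for symmetric hyperbolic systems.

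The main obstacle I anticipate is the careful verification that the reflecting boundary condition is \emph{conservative} rather than merely dissipative, i.e.\ that it is a maximal symmetric boundary condition, so that the deficiency indices of~$H$ vanish and one obtains a genuine unitary group rather than only a contraction semigroup. Concretely, this means checking that the boundary sesquilinear form vanishes identically on the graph of the condition~$(\slashed{n}-i)\psi=0$ and that this choice is maximal among symmetric ones. The timelike character of~$\partial M$ guaranteed by~\eqref{r0range} is what makes this possible, while the all-orders compatibility encoded in~\eqref{boundary} is what upgrades mere~$L^2$ well-posedness to invariance of the smooth domain~$C^\infty_{\text{init}}(N)$.
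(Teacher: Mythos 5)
Your proposal assembles the right ingredients (symmetry of~$H$ with vanishing boundary term, conservativeness and maximality of the reflecting condition, finite propagation speed, the all-orders compatibility conditions in~\eqref{boundary}), but its overall architecture is circular, and the circularity hits exactly where the real work lies. You propose to ``reduce the statement to the essential self-adjointness of~$H$ and then invoke Stone's theorem''; yet the only available route to essential self-adjointness in this setting is Chernoff's method~\cite{chernoff73}, whose \emph{hypothesis} is precisely the conclusion of the present lemma: a unitary time evolution, obtained by solving the hyperbolic initial-boundary value problem, which leaves a dense domain of smooth data invariant. You cannot obtain self-adjointness independently (say, by deficiency indices or elliptic estimates), because the Hamiltonian degenerates at the horizons where~$\Delta = 0$ --- this is the ``non-uniformly elliptic'' feature that is the whole point of~\cite{chernoff} --- and for the same reason you cannot recover smoothness and spatial compactness of~$e^{-i\tau \overline{H}}\psi_0$ from abstract spectral calculus. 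The paper's own logic runs in the opposite direction: Lemma~\ref{schroedingerPrep} is quoted from~\cite{chernoff} as a PDE well-posedness result, and only \emph{afterwards} is essential self-adjointness deduced from it via Chernoff's theorem (the lemma immediately following this one in the paper). Your proposal, read literally, derives the dynamics from self-adjointness and self-adjointness from the dynamics.

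The gap, concretely: the sentence ``waves \dots reflect back into~$M$ and the norm is exactly conserved at the boundary. Hence the evolution exists for all~$\tau$'' \emph{is} the theorem. What must be proven --- and what~\cite{chernoff} proves --- is existence and uniqueness for the symmetric hyperbolic initial-boundary value problem: energy estimates on the spacelike slices~$N_\tau$ adapted to the timelike boundary~$\{r = r_0\}$, a local-in-time existence argument, uniqueness from finite propagation speed, and propagation of smoothness up to the corner~$\partial N$, for which the compatibility conditions in~\eqref{boundary} are exactly what is needed. This is also where your commutation argument~$(\slashed{n}-i)(H^p\,U^{\tau';\tau}\psi_0)|_{\partial N} = (\slashed{n}-i)(U^{\tau';\tau}H^p\psi_0)|_{\partial N}$ should be rephrased: it needs only that~$H$ maps solutions to solutions (stationarity of the coefficients), not any self-adjointness of~$\overline{H}$. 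Once the PDE construction is in place, Stone's theorem is not needed at all for this lemma; conversely, without it, neither Chernoff's theorem nor Stone's theorem can be invoked. Your analysis of why the boundary condition is conservative and maximal is correct and is indeed a key ingredient, but it must feed into an energy-estimate-based existence proof, not into an abstract reduction to spectral theory.
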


Having a dense domain which is invariant under the time evolution makes it possible to
apply Chernoff's method~\cite{chernoff73} to obtain the following result. More details can be found in~\cite{chernoff}.
\begin{Lemma} \label{lemma-chernoff}
    The Dirac Hamiltonian $H$ in the Reissner-Nordstr\"om geometry in Edding\-ton-Finkelstein coordinates with domain of definition
    \bes
        \Dir(H)=C^\infty_\text{\rm{init}}(N)
    \ees
    is essentially self-adjoint on the Hilbert space~$\H_N$.
\end{Lemma}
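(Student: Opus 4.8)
The plan is to realize $H$ as the restriction to $C^\infty_\text{init}(N)$ of the self-adjoint generator of the unitary time evolution from Lemma~\ref{schroedingerPrep}, and then to invoke Chernoff's theorem to conclude that this domain is already a core. The essential analytic input, namely a dense domain invariant under a unitary evolution, has been provided by Lemma~\ref{schroedingerPrep}; the remaining work is functional-analytic bookkeeping.

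First I would exploit stationarity. Since $K=\partial_\tau$ is a Killing field and $H$ carries no explicit $\tau$-dependence, the propagators of Lemma~\ref{schroedingerPrep} obey the group law and depend only on the time difference, so that $U(\tau):=U^{\tau;0}$ defines a one-parameter family of unitary operators on~$\H_N$ with $U(\tau_1)U(\tau_2)=U(\tau_1+\tau_2)$ and $U(0)=\mathrm{Id}$. I would then verify strong continuity $U(\tau)\psi_0\to\psi_0$ as $\tau\to 0$ for $\psi_0\in C^\infty_\text{init}(N)$, using that $\tau\mapsto U(\tau)\psi_0$ is the smooth, spatially compact solution furnished by Lemma~\ref{schroedingerPrep}; strong continuity on all of~$\H_N$ then follows from density of $C^\infty_\text{init}(N)$ together with the uniform bound $\|U(\tau)\|=1$. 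Stone's theorem then yields a self-adjoint operator $A$ on $\H_N$ with $U(\tau)=e^{-i\tau A}$.

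Next I would identify $A$ with $H$ on the given domain. For $\psi_0\in C^\infty_\text{init}(N)$ the curve $\tau\mapsto U(\tau)\psi_0$ is differentiable with $i\,\partial_\tau U(\tau)\psi_0\big|_{\tau=0}=H\psi_0$ by~\eqref{schroedingerEq}, so that $C^\infty_\text{init}(N)\subseteq \Dir(A)$ and $A|_{C^\infty_\text{init}(N)}=H$. Because Lemma~\ref{schroedingerPrep} guarantees that $C^\infty_\text{init}(N)$ is invariant under $U(\tau)$, iterating this argument gives $A^p\psi_0=H^p\psi_0$ for all $p\in\N$, whence $C^\infty_\text{init}(N)\subseteq C^\infty(A):=\bigcap_{p}\Dir(A^p)$ consists of smooth vectors of $A$. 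Here the full tower of boundary conditions $(\slashed{n}-i)(H^p\psi_0)\big|_{\partial N}=0$ in~\eqref{boundary} is exactly what is needed: it is what keeps the successive images $H^p\psi_0$ inside the invariant domain. Applying Chernoff's theorem~\cite{chernoff73}, a dense subspace of $C^\infty(A)$ invariant under the group generated by the self-adjoint operator $A$ is a core for $A$; taking $D=C^\infty_\text{init}(N)$ gives $\overline{A|_D}=A$, and since $A|_D=H$ this means $\overline{H}=A$ is self-adjoint, i.e.\ $H$ is essentially self-adjoint on~$\H_N$.

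I expect the main obstacle to lie not in the application of Chernoff's theorem but in the clean verification of its hypotheses at the Hilbert-space level. The two delicate points are (a) upgrading the PDE well-posedness of Lemma~\ref{schroedingerPrep} to strong continuity and the group property of $U$ on~$\H_N$, and (b) confirming that $C^\infty_\text{init}(N)$ genuinely consists of $C^\infty$-vectors of $A$ with $A^p|_{C^\infty_\text{init}(N)}=H^p$, for which the interplay of the boundary conditions in~\eqref{boundary} with the $U$-invariance is indispensable.
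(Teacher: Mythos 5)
Your proposal is correct and follows essentially the same route as the paper, which likewise derives essential self-adjointness by applying Chernoff's method~\cite{chernoff73} to the dense domain $C^\infty_\text{init}(N)$ left invariant by the unitary time evolution of Lemma~\ref{schroedingerPrep} (the paper only sketches this, deferring details to~\cite{chernoff}). Your write-up simply fills in the functional-analytic steps --- Stone's theorem, identification of the generator with $H$ on the invariant domain, and the invariant-core argument --- that the paper leaves implicit.
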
 \noindent
Having specified the domain of the Hamiltonian, one could go through the transformations in Section~2.2
to work out the corresponding domain of the radial Hamiltonian in~\eqref{2dirac}.
Since the details will not be needed for our results, we omit these computations.

For ease in notation, we denote the self-adjoint extension of the Hamiltonian again by~$H$.
By the spectral theorem for self-adjoint operators we can express
the solution of the Cauchy problem for any~$\psi_0 \in \H_N$ as
\[ 
        \psi(\tau) = e^{-i\tau H} \psi_0 = \int_{\sigma(H)}e^{-i\omega \tau} \:\dif{E}_{\omega} \psi_0 \:, \]
where~$\omega \in \sigma(H)$ are the spectral values, and $\dif{E}_{\omega}$ is the corresponding
spectral measure of $H$.
As explained after~\eqref{angularON}, from now on we restrict attention to one
angular momentum mode and consider the corresponding two-component Dirac equation~\eqref{2dirac}.
For the later explicit analysis, it is helpful to rewrite
the spectral measure with the help of Stone's formula.
\begin{Lemma}
    \label{lemmaResolvente}
The solution of the Cauchy problem can be written as
\be
&X(\tau,r) \notag \\
&= \dfrac{1}{2\pi i} \:\lim_{a \rightarrow \infty} \lim_{\varepsilon \searrow 0} \int_{-a}^a e^{-i \omega \tau}
\bigg[(H_\xi - \omega - i\varepsilon)^{-1} - (H_\xi - \omega + i\varepsilon)^{-1} \bigg]
X_0(r) \:\dif{\omega} \:,  \label{Xresvolent}
\ee
where $(H_\xi - \omega \mp i\varepsilon)^{-1}$ are the resolvents of the Dirac Hamiltonian $H_\xi$ in the upper and lower half-planes
and $X_0(r) \in C^\infty_{\text{\rm{init}}}(N)$ is the initial data for a fixed angular mode $k,l$.
\end{Lemma}
\begin{proof}
Using the properties of the spectral measure, we obtain
\bes
    X(\tau,r) &= e^{-i H_\xi \tau} \lim_{a \rightarrow \infty}E_{(-a,a)} X_0(r) \nonumber \\
             &= \dfrac{1}{2} e^{-i H_\xi \tau} \lim_{a \rightarrow \infty} \bigg[
                 E_{(-a,a)} +  E_{[-a,a]}\bigg]X_0(r)
\ees
Applying Stone's formula (see~\cite[Theorem~VII.13]{reed+simon}) gives the result.
\end{proof}
We point out that in this lemma we solved the Cauchy problem in the region~$M = \R^+ \times (r_0, \infty) \times S^2$
which extends behind the Cauchy horizon at~$r=r_-$.
By restricting this solution to the region~$\R^+ \times (r_-, \infty) \times S^2$, we obtain
the solution of the Cauchy problem outside the Cauchy horizon, for initial data on the
hypersurface~$(r_-, \infty) \times S^2$.
Here we make use of the fact that, due to causality, a Dirac wave cannot cross the Cauchy horizon from
inside to outside.

\section{Computation of the Resolvent} \label{secresolvent}
\subsection{The Green's Matrix} \label{sectPrep}
Our next task is to calculate the resolvent~$(H_\xi - \omega \mp i\varepsilon)^{-1}$
appearing in Lemma~\ref{lemmaResolvente} in the upper and lower half planes.
It is most convenient to work with the Hamiltonian for the radial equation~\eqref{radialPDE}
after separation of variables, denoted by~$H_\xi$.

We introduce the abbreviation~$\omega \mp i\varepsilon  =: \omega_\varepsilon$.
Moreover, it is useful for the computations to write the operator~$H_\xi - \omega_\varepsilon \Idmat_{\C^2}$ as
\beq \label{Hxiomega}
H_\xi - \omega_\varepsilon \Idmat_{\C^2} = C(r)^{-1}\, \rr(\partial_r; \, r)
\eeq
with the matrix
\[
    C(r)^{-1} = \begin{bmatrix}
            - \frac{i}{(2r^2 - \Delta)} & 0 \\
            0 & \frac{i}{\Delta} \\
        \end{bmatrix}
\]
and the radial differential operator
\[ \rr(\partial_r; \, r) := \begin{bmatrix}
            \Delta \partial_r - i\omega_\varepsilon(2r^2 - \Delta) & |\Delta|^{1/2}(imr - \xi) \\
            -\epsilon(\Delta)|\Delta|^{1/2}(imr + \xi) & \Delta(i\omega_\varepsilon + \partial_r)\\
        \end{bmatrix}\:. \]
This makes it possible to write the radial equation as
    \be
    \label{radialEq}
        \Delta(r)\: \partial_r X(r) =
        \begin{bmatrix}
            i\omega_\varepsilon(2r^2 - \Delta) & -|\Delta|^{1/2}(imr - \xi) \\
            \epsilon(\Delta)|\Delta|^{1/2}(imr + \xi) & -i \Delta \omega_\varepsilon \\
        \end{bmatrix} X(r)
    \ee
Our strategy is to invert this differential operator with the help of the so-called Green's matrix~$G(r ; r')_{\omega_\varepsilon}$, being a distributional solution to the equation
\be
    \label{eqDistributional}
    \rr(\partial_r ; r) \:G(r ; r')_{\omega_\varepsilon} = \delta(r - r') \Idmat_{\C^2} \:.
\ee
The Green's matrix can be expressed in terms of the fundamental solutions of the radial equations.
We postpone the detailed computations to the next section (Section~\ref{prepJost}).
Here we explain how the resolvent can be computed from the Green's matrix.
\begin{Lemma}
        \label{propResolvent}
        Let $X_0 \in C^\infty_\text{\rm{init}}(N)$ be initial data for a fixed angular mode $k,l$.
        Then the resolvent acting on~$X_0$ can be expressed in terms of the Green's matrix~$G(r ; r')_{\omega_\varepsilon}$ in~\eqref{eqDistributional} as
\[ R_{\omega_\varepsilon}\, X_0(r) = (H_\xi - \omega_\varepsilon \Idmat_{\C^2})^{-1} X_0(r) = \int_{r_0}^{\infty} G(r; r')_{\omega_\varepsilon}\, C(r')\,
            X_0(r')\:\dif{r'} \:, \]
where~$C(r)$ is the matrix
\[ C(r) = \begin{bmatrix}
                i(2r^2 - \Delta) & 0 \\
                0 & -i\Delta \\
                \end{bmatrix} . \]
\end{Lemma}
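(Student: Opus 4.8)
The plan is to reduce the inversion of $H_\xi - \OmegaVar \Idmat_{\C^2}$ to the inversion of the purely differential operator $\rr(\partial_r;\, r)$, which is exactly what the Green's matrix of~\eqref{eqDistributional} accomplishes. I would start from the factorization~\eqref{Hxiomega}, namely $H_\xi - \OmegaVar \Idmat_{\C^2} = C(r)^{-1}\,\rr(\partial_r;\, r)$. Since $\OmegaVar = \omega \mp i\varepsilon$ has nonvanishing imaginary part and $H_\xi$ is (essentially) self-adjoint, so that $\sigma(H_\xi) \subseteq \R$, the point $\OmegaVar$ lies in the resolvent set; hence $R_{\OmegaVar}$ exists and is the unique bounded operator for which $X := R_{\OmegaVar} X_0$ is the element of the domain of $H_\xi$ solving $(H_\xi - \OmegaVar \Idmat_{\C^2})\, X = X_0$. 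It therefore suffices to show that the candidate
\beq
X(r) := \int_{r_0}^{\infty} G(r;r')_{\OmegaVar}\, C(r')\, X_0(r')\:\dif{r'}
\eeq
lies in the domain and satisfies this equation.

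For the differential equation I would apply the factorized operator to $X$ and interchange $\rr(\partial_r;\, r)$ with the $r'$-integration, which by the defining property~\eqref{eqDistributional} of the Green's matrix gives
\bes
\big(H_\xi - \OmegaVar \Idmat_{\C^2}\big)\, X(r)
&= C(r)^{-1} \int_{r_0}^{\infty} \big[ \rr(\partial_r;\, r)\, G(r;r')_{\OmegaVar} \big]\, C(r')\, X_0(r')\:\dif{r'} \\
&= C(r)^{-1} \int_{r_0}^{\infty} \delta(r-r')\, \Idmat_{\C^2}\, C(r')\, X_0(r')\:\dif{r'}
= C(r)^{-1}\, C(r)\, X_0(r) = X_0(r) \:.
\ees
Here I use that $C(r)$ is invertible on the range~\eqref{r0range}, where $2r^2 - \Delta = r^2 + 2Mr - Q^2 > 0$ and $\Delta \neq 0$ away from the two horizons, at which the identity extends by continuity since the fundamental solutions are regular in the tortoise coordinate $u$. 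The interchange of the differential operator with the integral is the step requiring care: it is valid in the distributional sense precisely because the Green's matrix, assembled in Section~\ref{prepJost} from two fundamental solutions, is designed to have a jump across the diagonal $r=r'$ of the size that produces the Dirac delta in~\eqref{eqDistributional}. Concretely, I would split the integral at $r'=r$ and verify that the boundary term generated when $\partial_r$ hits the endpoint reproduces exactly the delta contribution.

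The genuine obstacle is to show that $X$ actually belongs to the domain of the self-adjoint Hamiltonian, rather than being only a formal solution. Two properties are needed. First, $X$ must satisfy the reflecting boundary condition at $r=r_0$; this is built into the Green's matrix by using, for $r<r'$, the fundamental solution that already obeys this boundary condition, so that the corresponding factor in $G(r;r')_{\OmegaVar}$ transfers the condition to $X$. Second, $X$ must lie in the Hilbert space $\H_N$ for the scalar product of Lemma~\ref{lemmasprod}; here one exploits that for $\OmegaVar$ with $\varepsilon>0$ the asymptotics of Lemma~\ref{krpMul} are deformed so that exactly one fundamental solution decays as $u\to\infty$ and one decays towards the Cauchy horizon, and $G(r;r')_{\OmegaVar}$ is built from the decaying solution on the appropriate side of the diagonal. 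Combined with the compact support of $X_0$, this yields the integrability required for membership in $\H_N$. Once $X$ is known to lie in the domain and to solve $(H_\xi - \OmegaVar \Idmat_{\C^2})\, X = X_0$, injectivity of $H_\xi - \OmegaVar \Idmat_{\C^2}$ together with the existence of the bounded resolvent forces $X = R_{\OmegaVar} X_0$, which is the asserted formula.
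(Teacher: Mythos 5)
Your proposal is correct and follows essentially the same route as the paper: the paper's (very terse) proof likewise consists of applying the factorization~$H_\xi - \OmegaVar \Idmat_{\C^2} = C(r)^{-1}\,\rr(\partial_r;\,r)$ to the integral and invoking the distributional equation~\eqref{eqDistributional} to produce the delta function, i.e.\ exactly your central computation. The additional points you verify --- invertibility of~$C(r)$, the boundary condition and $L^2$-membership of the candidate solution, and uniqueness via the self-adjointness of~$H_\xi$ --- are left implicit in the paper, so your write-up is a more careful version of the same argument rather than a different one.
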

\begin{proof} By direct computation using~\eqref{eqDistributional}, one verifies that the
operator~$H_\xi - \omega_\varepsilon \Idmat_{\C^2}$ in~\eqref{Hxiomega} has the inverse
\[ \Big(\big( H_\xi - \omega_\varepsilon \Idmat_{\C^2} \big)^{-1} X \Big)(r) = \int_{r_0}^\infty G(r;r')_{\omega_\varepsilon}\: C(r')\: X(r')\: \dif{r'} \:. \]
This gives the result.
\end{proof}

\subsection{The Radial Jost Solutions}
\label{prepJost}
We now define the fundamental solutions which we will later use to compute the Green's matrix.
Our method is to construct Jost solutions~$\mathcal{J}_{\pm}$.
In preparation, we rewrite the first-order radial system of ODEs stemming
from the matrix Dirac equation in \eqref{radialODE2} into two second-order scalar equations,
sometimes referred to as the Jost equation. We consider the three regions
\beq \label{regions}
(r_0, r_-)\:,\qquad (r_-, r_+) \qquad \text{and} \qquad (r_+, \infty)
\eeq
separately. Choosing the Regge-Wheeler coordinate~$u \in \R$ in one of these regions,
the radial solutions~$X^{(\omega)}(r(u))$ satisfy the equations
\be
    \label{secondOrder}
        \bigg[\partial_u^2 + E_{\omega, l}(u)\partial_u + K^+_{\omega, l}(u)\bigg] X^{(\omega)}_{+}(u) &= 0 \\[0.2em]
        \bigg[\partial_u^2 + E_{\omega, l}(u)\partial_u + K^-_{\omega, l}(u)\bigg] X^{(\omega)}_{-}(u) &= 0
\ee
where $E_{\omega, l}(u)$ and $K^{\pm}_{\omega, l}(u)$ are smooth functions.
The explicit form of these functions is rather involved and will not be given here.
However, for the construction of the Jost solutions, it suffices to analyze the
asymptotic form of these functions as~$u \rightarrow \pm \infty$.
We recall the basic idea for~$X^{(\omega)}_{+}$ and the asymptotics~$u \rightarrow +\infty$.
We rewrite \eqref{secondOrder} in the form
\[
    \bigg[-\partial^2_u - \Omega_+^2 \bigg]\Jost{+} = - W^+(u)\Jost{+}
\]
with a complex constant~$\Omega_+$ and a potential~$W^+$.
Having the complex coefficients, the differential operator on the left side can be inverted with
the help of an explicitly given Green's kernel~$S(u,v)$, i.e.\
\[
    \big[\partial_v^2 - \Omega_+^2 \big]S(u,v) = \delta(u-v)
\]
This makes it possible to formulate a Lippmann-Schwinger equation of the form
\be
    \label{lippmann}
    \Jost{+} = e^{i \Omega_+ u} + \int_u^{\infty} S(u,v)\:W^+(v)\:\Jost{+}(v)\: \dif{v}
\ee
Using that~$W$ decays as~$u \rightarrow \infty$, one can perform an expansion in powers of~$W$,
\beq \label{Jost+}
    \Jost{+} = \sum_n^{\infty}\Jost{+}^{(n)} \:.
\eeq
Similar to the usual Picard-Lindel\"of iteration (defined on a bounded interval),
one can show that this series converges uniformly,
giving rise to a unique solution with prescribed asymptotics as~$u \rightarrow \infty$.
Substituting the resulting solution~$X_+^{(\omega)}$ into the first-order system~\eqref{radialEq},
one can solve for~$X_-^{(\omega)}$. We thus obtain a solution of the radial equation
with prescribed asymptotics.

This method has been worked out for the Dirac and wave equations in the Kerr geometry
in Boyer-Lindquist coordinates in~\cite{wdecay, schdecay} as well in Eddington-Finkelstein coordinates
in~\cite{hamilton} (for basics and more details see also~\cite{alfaro+regge}).
For simplicity, we state the result only in the region~$(r_+, \infty)$ outside the event horizon
and note that the regions~$(r_0, r_-)$ and~$(r_-, r_+)$ are treated similarly.
\begin{Lemma}
    \label{jost}
For every angular momentum mode $k$ and $l$ and  $\omega_\varepsilon \in
\{ \omega_\varepsilon \neq 0 \, | \, \omega_\varepsilon \in \C\}$,
there are unique radial Jost solution
to the complexified Schrödinger-type equations~\eqref{secondOrder}
in the region~$(r_+, \infty)$
with the asymptotic boundary conditions
\be
\lim_{u\rightarrow +\infty} e^{-i \Omega_+(\omega_\varepsilon)\, u - i c(\omega_\varepsilon) \ln u}\, \mathcal{J}_{+}(u) &= 1 \:, &
\lim_{u\rightarrow -\infty} e^{- i \Omega_-(\omega_\varepsilon)\,  u} \,\mathcal{J}_{-}(u) &= 1 \label{asy1} \\
\lim_{u\rightarrow +\infty} \partial_{u} \Big( e^{-i \Omega_+(\omega_\varepsilon)\,  u - i c(\omega_\varepsilon) \ln u} \mathcal{J}_{+}(u)\Big) &= 0 \;, &
\lim_{u\rightarrow -\infty} \partial_{u} \Big( e^{- i \Omega_-(\omega_\varepsilon)\,  u}\mathcal{J}_{-}(u)\Big) &= 0 \label{asy2}
\ee
for suitable complex numbers~$c$ and~$\Omega_\pm$.
Furthermore, these solutions are analytic and smooth in~$u$ and~$\omega_\varepsilon$.
\end{Lemma}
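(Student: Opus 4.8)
The plan is to realize each Jost solution as the unique fixed point of the Lippmann--Schwinger equation~\eqref{lippmann} and to recover it as the limit of the Born series~\eqref{Jost+}, treating separately the two ends $u \to \pm\infty$ in each of the three regions~\eqref{regions}. As a preparatory step I would bring the second-order equations~\eqref{secondOrder} into the normal form used in the text by removing the first-order term $E_{\omega,l}(u)\,\partial_u$ via the integrating factor $\exp\!\big(-\tfrac12\int^u E_{\omega,l}\big)$; the transformed equation reads $[\partial_u^2 + \widetilde{K}^\pm_{\omega,l}(u)]\,\Jost{\pm} = 0$ with $\widetilde{K}^\pm_{\omega,l} = K^\pm_{\omega,l} - \tfrac12 E_{\omega,l}' - \tfrac14 E_{\omega,l}^2$. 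The constants $\Omega_\pm$ are then identified as the suitably branched limiting values $\Omega_\pm^2 = \lim_{u\to\pm\infty}\widetilde{K}^\pm_{\omega,l}(u)$, with the branch fixed exactly as in Lemma~\ref{krpMul}(i), and the residual potential is $W^\pm(u) := \Omega_\pm^2 - \widetilde{K}^\pm_{\omega,l}(u) \to 0$. Recording its decay is essential for the sequel: as the exponential error bound of Lemma~\ref{krpMul}(ii) reflects, $W^\pm$ decays exponentially at every horizon end of each region, whereas at spatial infinity in the region $(r_+,\infty)$ only the polynomial bound $\|W^+(u)\| \leq c/u$ holds, reflecting the long-range, Coulomb-type tail responsible for the logarithmic corrections in the phases~\eqref{asyPhase}.

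Next I would invert the constant-coefficient operator $-\partial_u^2 - \Omega_\pm^2$ by the explicit Green's kernel $S(u,v)$, fixing the homogeneous solution and the domain of integration so that the prescribed outgoing normalization is selected --- the tail integral $\int_u^\infty$ for $\Jost{+}$ as in~\eqref{lippmann}, and the mirror-image integral $\int_{-\infty}^u$ for $\Jost{-}$. Iterating this Volterra-type equation produces the series~\eqref{Jost+}, and existence and uniqueness of a solution with the asymptotics~\eqref{asy1}--\eqref{asy2} reduce to its absolute, locally uniform convergence. At the horizon ends, where $\|W^\pm\|$ is exponentially small, the $n$-fold iterated integral is controlled by a convergent geometric majorant and the iteration is a contraction, so convergence --- and with it the asymptotic conditions, read off from the leading term together with the vanishing of the integral correction --- is routine.

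The main obstacle is the non-integrable $1/u$ tail of $W^+$ at spatial infinity, where a naive modulus estimate of~\eqref{lippmann} diverges. The device I would use is the exponential damping built into the kernel: for $\OmegaVar \notin \R$ --- in particular throughout the resolvent regime $\OmegaVar = \omega \mp i\varepsilon$ with $\varepsilon>0$ relevant to Lemma~\ref{lemmaResolavente} --- the constant $\Omega_+$ has $\im\Omega_+ \neq 0$, so that $|S(u,v)| \leq C\,e^{-|\im\Omega_+|\,|u-v|}$. This exponential factor dominates the slowly decaying potential and renders the iterated integrals finite, yielding the majorant $\|\Jost{\pm}^{(n)}(u)\| \leq C(\OmegaVar)^n/n!$ uniformly on compact subsets of $\{\OmegaVar \neq 0\}$; this is precisely where the hypothesis $\OmegaVar \neq 0$ enters, and the estimate follows the scheme of~\cite{hamilton, wdecay, schdecay}. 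On the borderline set of real $\OmegaVar$ the logarithmic Coulomb phase of~\eqref{asyPhase} must be carried along in the leading asymptotics, but this case is not needed for the resolvent. Granting the majorant, the series converges and defines the unique Jost solution.

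It remains to establish analyticity and smoothness. Each iterate $\Jost{\pm}^{(n)}$ depends analytically on $\OmegaVar$, since $\Omega_\pm(\OmegaVar)$, the kernel $S(u,v)$ and the potential $W^\pm$ do, and integration preserves analyticity; because the Born series converges locally uniformly in $\OmegaVar$ by the preceding majorant, Morera's theorem shows that the limit $\Jost{\pm}$ is analytic in $\OmegaVar$ on $\{\OmegaVar \neq 0\}$. Smoothness in $u$ follows by bootstrapping: differentiating the integral equation~\eqref{lippmann} expresses $\partial_u \Jost{\pm}$ and, upon repeated differentiation, all higher derivatives in terms of $\Jost{\pm}$ and the smooth coefficients $E_{\omega,l}, K^\pm_{\omega,l}$, so $\Jost{\pm}$ is jointly smooth in $u$ and $\OmegaVar$. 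This completes the plan.
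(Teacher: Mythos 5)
Your horizon-end construction (Volterra iteration against an exponentially decaying potential) and your analyticity argument (locally uniform convergence plus Morera, bootstrapping for smoothness in $u$) are fine, but the step you yourself single out as the main obstacle contains a genuine error. Your treatment of the $1/u$ tail at spatial infinity rests on the bound $|S(u,v)|\leq C\,e^{-|\im \Omega_+|\,|u-v|}$, and no kernel satisfying this bound can be used in the one-sided equation~\eqref{lippmann}. For a one-sided integral $\int_u^\infty$, the fixed-point equation is equivalent to the ODE~\eqref{secondOrder} only if the kernel vanishes on the diagonal; this forces $S(u,v)=\pm\,\Omega_+^{-1}\sin\big(\Omega_+(u-v)\big)$, which for $\im\Omega_+\neq 0$ \emph{grows} like $e^{|\im\Omega_+|\,|u-v|}$. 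The exponentially decaying kernel $e^{i\Omega_+|u-v|}/(2i\Omega_+)$, which is the one obeying your bound, does satisfy the distributional equation, but inserted into a one-sided integral it does not reproduce the ODE: differentiating $\int_u^\infty e^{i\Omega_+(v-u)}\,W^+\mathcal{J}_+\,\dif{v}$ twice produces a spurious term proportional to $(W^+\mathcal{J}_+)'$ because the delta function at $v=u$ is never crossed. With the correct Volterra kernel, factoring out $e^{i\Omega_+u}$ leaves the effective kernel $\big(1-e^{2i\Omega_+(v-u)}\big)/(2i\Omega_+)$, which for $\im\Omega_+>0$ is bounded but does \emph{not} decay; against $|W^+(v)|\sim c/v$ already the first Born iterate diverges logarithmically, for every $\omega_\varepsilon$, real or complex. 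Hence the majorant $C(\omega_\varepsilon)^n/n!$ is not available and the series~\eqref{Jost+} does not converge in your setup.

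This divergence is not a removable technicality: it is exactly the Coulomb-type effect that generates the logarithmic terms in the phases~\eqref{asyPhase} of Lemma~\ref{krpMul}, and it does not disappear when $\omega_\varepsilon$ leaves the real axis, contrary to your assertion that the logarithmic phase ``is not needed'' off the real line. The standard repair, carried out in~\cite{decay} and~\cite{hamilton}, is to take the full phase $e^{i\Phi_+(u)}$ (including the $\ln u$ terms), rather than the plane wave $e^{i\Omega_+u}$, as the zeroth-order term of the iteration; the residual potential then decays at an integrable rate and the Volterra scheme closes, with the asymptotics~\eqref{asy1}--\eqref{asy2} understood relative to that comparison dynamics. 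Note also that the paper's own proof sidesteps the construction entirely: for $\im\omega_\varepsilon>0$ it invokes the Jost solutions already constructed in~\cite{wdecay, schdecay}, and it obtains the case $\im\omega_\varepsilon<0$ by complex conjugation of the Schr\"odinger-type equation --- a symmetry that would also have halved your case analysis had you carried out the corrected construction on one half-plane only.
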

\begin{proof}
In the case~$\im \omega_\varepsilon > 0$, the Jost solutions have been constructed in~\cite{wdecay, schdecay}.
By complex conjugation of the Schrödinger-type equation, we obtain corresponding solutions
for~$\im \omega_\varepsilon < 0$.
\end{proof} \noindent
We note that the asymptotics~\eqref{asy1} and~\eqref{asy2} for~$\mathcal{J}_{\pm}$ can also be written as
\[ \mathcal{J}_{+}(u) \simeq e^{i \Omega_{+}(\omega_\varepsilon) u + i c(\omega_\varepsilon) \ln u} + E_+(u) \quad \text{and} \quad
\mathcal{J}_{-}(u) \simeq e^{i \Omega_{-}(\omega_\varepsilon) u} +E_-(u) \]
with a decaying error term~$E_{\pm}(u)$. This shows that~$\Omega(\omega_\varepsilon)_{\pm}$ and~$c(\omega_\varepsilon)$ encode the asymptotic phase and
amplitude of the wave.

We now collect the Jost solutions in the respective regions~\eqref{regions}
and introduce a convenient notation. We can use the same notation for the Jost solutions
constructed near the event horizon from inside and outside (and similarly near the Cauchy horizon).
because these solutions have the same asymptotics in the Regge-Wheeler coordinates
of the respective spacetime region. Note that the Regge-Wheeler coordinate~$u$
tends to $-\infty$ at the event horizon and to $+\infty$ at the Cauchy horizon.

In order to find Jost solutions, we must ensure that the integral in~\eqref{lippmann} and the
series~\eqref{Jost+} converge. Depending on the signs of~$\im{(\omega_\varepsilon)}$ and $\re{(\omega_\varepsilon)}$,
we thus obtain different solutions as compiled in the next lemma.
\begin{Lemma}
    \label{defJost}
    We introduce Jost solutions with the following asymptotics:
    \begin{enumerate}[leftmargin=2em]
        \item[{\rm{(i)}}] Near spatial infinity in the case $|\omega| > m$ and sufficiently small~$\varepsilon>0$:
            \be
                &\widehat{\mathcal{J}}_{\infty}(u) = f_{\infty,1} \:U_{\omega_\varepsilon}\:
                \begin{pmatrix}
                    1 \\
                    0
                \end{pmatrix}
                e^{i\phi_+(u)} \,\Bigg[ 1 + \mathcal{O}\Bigg(\dfrac{1}{u}\Bigg) \Bigg] \label{asy414} \\
                &\qquad \text{if }  \im (\omega_\varepsilon) < 0 \, \text{ with } \, \re (\omega_\varepsilon) < 0
                     \, \text{ or }\, \im (\omega_\varepsilon) > 0 \, \text{ with } \, \re (\omega_\varepsilon) > 0 \nonumber \\
                &\widecheck{\mathcal{J}}_{\infty}(u) = f_{\infty,2}\:U_{\omega_\varepsilon}\:
                \begin{pmatrix}
                    0 \\
                    1
                \end{pmatrix}
                e^{-i\phi_-(u)}  \,\Bigg[ 1 + \mathcal{O}\Bigg(\dfrac{1}{u}\Bigg) \Bigg] \label{asy416}  \\
            &\qquad \text{if } \im (\omega_\varepsilon) > 0 \, \text{ with } \, \re (\omega_\varepsilon) < 0
                    \, \text{ or } \, \im (\omega_\varepsilon) < 0 \, \text{ with } \, \re (\omega_\varepsilon) > 0 \nonumber \\
\intertext{with}
            \;\;\;&\,\,\phi_{\pm}(u) = \sqrt{\omega_\varepsilon^2 - m^2}\: u + M \bigg(\pm 2\omega_\varepsilon +
                    \dfrac{m^2}{\sqrt{\omega_\varepsilon^2 - m^2}}\bigg)\ln(u) \notag \\
\intertext{and}
\;\;\;&\,\,U_{\omega_\varepsilon} = \begin{bmatrix}
            \cosh(\Theta) & \sinh(\Theta) \\
            \sinh(\Theta) & \cosh(\Theta) \\
        \end{bmatrix} \quad \text{with} \qquad
\Theta := \frac{1}{4}\, \ln \Big( \frac{\omega_\varepsilon - m}{\omega_\varepsilon + m} \Big) \:. \notag
\ee
\item[{\rm{(ii)}}] Near spatial infinity in the case $|\omega| < m$ and for sufficiently small~$\varepsilon>0$:
        \be
        &\widehat{\mathcal{J}}_{\infty}(u) =  V_{\omega_\varepsilon} \:
        \begin{pmatrix}
            f_{\infty} \\
            0
        \end{pmatrix}
        e^{- \sqrt{m^2 - \omega_\varepsilon^2} \, u} \,\Bigg[ 1 + \mathcal{O}\Bigg(\dfrac{1}{u}\Bigg) \Bigg] \notag \\
        \intertext{with}
        &V_{\omega_\varepsilon} = \begin{bmatrix}
            \dfrac{i m}{2 \sqrt{m^2 - \omega^2}} & \dfrac{1}{2} \big(1 + \dfrac{i \omega}{\sqrt{m^2 - \omega^2}}\big)\\
            - \dfrac{i m}{2 \sqrt{m^2 - \omega^2}} & \dfrac{1}{2} \big(1 - \dfrac{i \omega}{\sqrt{m^2 - \omega^2}}\big)
        \end{bmatrix} . \notag
        \ee
    \item[{\rm{(iii)}}] Near the event horizon $r_+$:
\[            \widecheck{\mathcal{J}}_{+}(u) =
                \begin{pmatrix}
                    0\\
                    h_{+,1}
                \end{pmatrix}
                \Bigg[ 1 + \mathcal{O} \Bigg(e^{bu}\bigg) \Bigg] \quad \text{and} \quad
            \widehat{\mathcal{J}}_{+}(u) =
                \begin{pmatrix}
                    h_{+,2}\\
                    0
                \end{pmatrix} e^{2i\omega_\varepsilon u} \Bigg[ 1 + \mathcal{O} \Bigg(e^{bu}\bigg) \Bigg]
\]
        where, depending on $\im(\omega)$, the above functions are well-defined and in $L^2_{\text{\rm{loc}}}$
        near the event horizon:
        \be
            \widecheck{\mathcal{J}}_{+}(u), \quad \widehat{\mathcal{J}}_{+}(u) \quad &\text{for} \quad \im(\omega_\varepsilon) < 0 \nonumber \\
            \widecheck{\mathcal{J}}_{+}(u) \qquad &\text{for} \quad \im(\omega_\varepsilon) > 0 \nonumber
        \ee

    \item[{\rm{(iv)}}] Near the Cauchy horizon $r_-$:
\[
            \widecheck{\mathcal{J}}_{-}(u) =
                \begin{pmatrix}
                    0\\
                    h_{-,1}
                \end{pmatrix}
            \Bigg[ 1 + \mathcal{O}\Bigg(e^{-bu}\Bigg)\Bigg] \text{ and }
            \widehat{\mathcal{J}}_{-}(u) =
                \begin{pmatrix}
                    h_{-,2}\\
                    0
                \end{pmatrix}
                e^{2i\omega_\varepsilon u}
            \Bigg[1 + \mathcal{O}\Bigg(e^{-bu}\Bigg)\Bigg] \]
        where, depending on $\im(\omega)$, the above functions are well-defined and in~$L^2_{\text{\rm{loc}}}$
        near the Cauchy horizon:
        \be
            \widecheck{\mathcal{J}}_{-}(u), \quad \widehat{\mathcal{J}}_{-}(u) \quad &\text{for} \quad \im(\omega_\varepsilon) > 0 \nonumber \\
            \widecheck{\mathcal{J}}_{-}(u) \qquad &\text{for} \quad \im(\omega_\varepsilon) < 0 \nonumber
        \ee
    \end{enumerate}
Here~$f_{\infty, 1/2} \neq 0$ and $ h_{\pm, 1/2} \neq 0$ are constants and $b \in \R_+$.

All these Jost solutions converge locally uniformly as~$\varepsilon \searrow 0$
respectively~$\varepsilon \nearrow 0$ to smooth solutions of the ODE with~$\varepsilon=0$.
\end{Lemma} \noindent
We remark that, in the limit~$\varepsilon \searrow 0$, the asymptotics of the solutions
near infinity and near the horizons was worked out in~\cite[Theorem~1.1]{krpoun-mueller}.
We also point out that, in the case~$|\omega_\varepsilon | < m$,
there is only one Jost solution, denoted by~$\widehat{\mathcal{J}}_{\infty}$,
which decays exponentially at infinity. In this case, we choose an arbitrary fundamental
solution which is linearly independent of~$\widehat{\mathcal{J}}_{\infty}$ and denote
it by~$\widecheck{\mathcal{J}}_{\infty}$. This fundamental solution increases exponentially near
infinity. It is not canonical. But our main theorem will not depend on this choice.

\begin{proof}[Proof of Lemma~\ref{defJost}]
We begin with the Jost solutions near spatial infinity.
According to Lemma~\ref{krpMul}, the solutions have the plane wave asymptotics
with asymptotic phases given by~\eqref{asyPhase}.
It suffices to consider the linear term in~\eqref{asyPhase}, because for large~$u$ it dominates the logarithm.
In the case~$|\omega_\varepsilon| > |m|$, by a Taylor expansion of the square root we obtain
\[ \sqrt{\omega_\varepsilon^2 - m^2} =
            \sqrt{\omega^2 - m^2}  + i \im{(\omega_\varepsilon)}
            \re{(\omega_\varepsilon)}\dfrac{1}{\sqrt{\omega^2 - m^2}} +
            \mathcal{O}\big(\varepsilon ^2\big) \:. \]
    Since the imaginary term determines the decay behavior, it suffices to consider the second term.
    This shows that the combinations of real and imaginary parts in~\eqref{asy414} and~\eqref{asy416}
    ensure that the exponentials~$e^{\pm i \phi_\pm(u)}$ decay at infinity. This property is precisely what is
    needed in order for the Jost solutions to be well-defined.
In the remaining case~$|\omega_\varepsilon| \leq |m|$, using that~$\sqrt{\omega_\varepsilon^2 - m^2} =  i\sqrt{m^2 - \omega_\varepsilon^2}$ and doing the
same expansion in $\varepsilon$, we see the first term in $u$ dominates the convergent behavior. It is
also independent of $\im (\omega_\varepsilon)$. Therefore, we only have one exponential decaying solution for this case at infinity.\\
    Now, we look at the Jost functions near the event and Cauchy horizons. We only consider the Cauchy horizon, because the event horizon can be treated similarly.\\
    Near the Cauchy horizon, the fundamental solutions have the plane wave
    asymptotics as worked out in Theorem~\ref{krpMul}.
    Using that the potential~$W$ in the Lippmann-Schwinger equation~\eqref{lippmann} decays exponentially, it turns out that there are two convergent Jost
    solutions~$\widehat{\mathcal{J}}_{-}$ and~$\widecheck{\mathcal{J}}_{-}(u)$
    defined for~$\omega_\varepsilon$ close to the real axis, which form a fundamental system
    (for details see~\cite[Section~3]{wdecay}).
\end{proof}
Finally, the boundary condition at $r=r_0$ as initial data for the radial ODE, we obtain
a solution~$\mathcal{J}_{\partial M}$ with the following boundary values.
\begin{enumerate}[leftmargin=2em]
    \item[{\rm{(iv)}}] At the boundary~$\partial M$:
    \bes
        \mathcal{J}_{\partial M}(u) = \mathcal{J}^{(1)}_{\partial M}(u)
        \begin{pmatrix}
            1 \\[0.3em] \displaystyle
            \frac{\sqrt{|\Delta|}}{r_+}
        \end{pmatrix}
    \ees
\end{enumerate}

\subsection{Construction of the Green's Matrix} \label{constructGreen}
Our next goal is to express the Green's matrix~\eqref{eqDistributional} in terms of Jost solutions.
To this end, we make the general ``gluing ansätze'' depending on $r'$ and $\im(\omega_\varepsilon)$
\begin{align*} 
G(r,r') &= \Theta(r-r') \\
&\quad \times \big[ \Phi_1(r) \otimes P_1(r') + \Phi_2(r) \otimes P_2(r') \big] &&\text{ for } r' \in (r_-, r_+)
    \text{ and } \im(\omega_\varepsilon) < 0 \nonumber\\
G(r,r') &= \Theta(r'-r) \\
&\quad \times \big[ \Phi_1(r) \otimes P_1(r') + \Phi_2(r) \otimes P_2(r') \big] &&\text{ for } r' \in (r_-, r_+)
    \text{ and } \im(\omega_\varepsilon) > 0 \nonumber\\
G(r,r') &= \Theta(r-r') \Phi_1(r) \otimes P_1(r') \\
&\qquad + \Theta(r'-r) \Phi_2(r) \otimes P_2(r') &&\text{ for } r' \notin (r_-, r_+)\:,
\end{align*}
where~$\Phi_1$ and~$\Phi_2$ are the Jost solutions defined in the respective regions
\[ \left\{ \begin{array}{cl} \Phi_1(r) & \text{for~$r_0 \leq r  \leq r'$} \\[0.1em]
\Phi_2(r) & \text{for~$r' \leq r < \infty$}\:. \end{array} \right. \]
Here and in what follows, it is more convenient to work again with the radial variable~$r \in [r_0, \infty)$.
The corresponding Regge-Wheeler coordinate is obtained in the respective regions
(i.e., inside the Cauchy horizon, between the horizons, and in the asymptotic end)
by~\eqref{reggeWheeler}. We point out that when extending the solutions~$\Phi_1$ or~$\Phi_2$
across the event or Cauchy horizons, we need to make sure that their $L^2$-norms are finite, i.e.\
\[ \Phi_1 \in L^2((r_0, r'), \C^2) \qquad \text{and} \qquad \Phi_2 \in L^2((r', \infty), \C^2) \:, \]
where we work with the $L^2$-scalar product in~\eqref{skalarprodukt}.

We now explain in detail how the functions~$\Phi_1$ and~$\Phi_2$ can be chosen.
We consider the cases separately when~$\omega_\varepsilon$ is in the upper and lower half plane.
\begin{enumerate}[leftmargin=2em]
    \item[{\rm{(i)}}]
    $\im (\omega_\varepsilon) < 0 \text{ and } \re(\omega_\varepsilon) < 0 \text{ for }
    |\omega_\varepsilon| > m$: \\
    We begin with the case that~$r'$ is outside the event horizon. Due to the choice of of $\omega_\varepsilon$, $\widehat{\Jost{\infty}}(r)$ decays as~$r \rightarrow \infty$, and we can set
    \[
            \Phi^{\infty}_{1}(r) = a \:\widehat{\Jost{\infty}}(r) \:.
    \]
    For $r<r'$ we have two possibilities. One is to take the solution~$\widehat{\Jost{+}}$. This solution
    decays exponentially at the event horizon. Extending it by zero up to~$r=r_0$ gives a solution in~$L^2$.
    This leads us to the ansatz
\beq \label{Phi2}
        \Phi^{\infty}_{2}(r) = b \:\Theta(r'-r) \:\Theta(r - r_+) \:\widehat{\Jost{+}}(r) \:.
\eeq
The other possibility is to take another fundamental solution and to extend it across the
event and Cauchy horizons up to the boundary at~$r=r_0$. Because of the negative imaginary part of $\omega_\varepsilon$, only the
constant function $\widecheck{\mathcal{J}}_{-}(r)$ is square integrable at the Cauchy horizon.
However, for all values of~$\omega$, this solution does not satisfy the
boundary conditions at~$r_0$. For this reason, we are forced to choose the fundamental solution~\eqref{Phi2},
leaving us with two free parameters~$a , b \, \in \C$ (here the superscript~$\infty$ clarifies that~$r'$ lies in the asymptotic end).
These solutions are shown in Figure~\ref{fig1}.
\begin{figure}
    \psscalebox{1.0 1.0} 
    {
    \begin{pspicture}(0,-4.01)(15.42,1.23)


    \definecolor{grey}{rgb}{0.7019608,0.7019608,0.7019608}
    \definecolor{purp}{rgb}{0.3019608,0.2,0.6}

    \psline[linecolor=black, linewidth=0.04](0.00,-3.98)(15.41,-3.98)
    \psline[linecolor=black, linewidth=0.04](1,-3.98)(1,-1.00)
    \psline[linecolor=black, linewidth=0.04, linestyle=dotted, dotsep=0.10583334cm](4,-3.98)(4,-1.00)
    \psline[linecolor=black, linewidth=0.04, linestyle=dotted, dotsep=0.10583334cm](10,-3.98)(10,-1.00)
    \psline[linecolor=grey, linewidth=0.04, linestyle=dashed, dash=0.17638889cm 0.10583334cm](13, -3.98)(13,-1)



    \pszigzag[coilarm=0, coilwidth=0.20, linecolor = blue ](3,-1.90)(5,-1.90)
    \rput[bl](3.25,-1.65){$\widecheck{\Jost{-}}$}

    \pszigzag[coilarm=0, coilwidth=0.20, linecolor = orange ](1,-2.2)(2,-2.2)
    \rput[bl](1.25,-3){$\Jost{\partial M}$}

    \pszigzag[coilarm=0, coilwidth=0.20, linecolor = red ](10,-3.98)(11,-3.5)
    \rput[bl](10.25,-3.25){$\widehat{\Jost{+}}$}

    \pszigzag[coilarm=0, coilwidth=0.20, linecolor = brown ](14.5,-2.5)(15.5,-3)
    \rput[bl](14.5,-3.5){$\widehat{\Jost{\infty}}$}
    \psplot[algebraic=true, linecolor=green]{13}{14}{2 / (x - 12)^2 - 2.98}
    \psplot[algebraic=true, linecolor=green]{11.5}{13}{2 / (x - 14)^2 - 3.5}

    \psline[linecolor=green, linewidth=0.05](1,-3.98)(10,-3.98)

    \psline[linecolor=black, linewidth=0.04, arrowsize=0.05291667cm 2.0,arrowlength=1.4,arrowinset=0.0]{<-}(2.5,-1.75)(3,-0.75)
    \rput[bl](3.2,-0.5){\makebox[0pt]{only non-generic choice}}

    \psline[linecolor=black, linewidth=0.04, arrowsize=0.05291667cm 2.0,arrowlength=1.4,arrowinset=0.0]{<-}(12.8,-1.25)(12.3,-0.75)
    \rput[bl](12.5,-0.5){\makebox[0pt]{discontinuity}}

    \rput[bl](0.9,-4.4){$r_0$}
    \rput[bl](3.9,-4.4){$r_-$}
    \rput[bl](9.9,-4.4){$r_+$}
    \rput[bl](13,-4.4){$r^{\prime}$}
    \rput[bl](15,-4.4){$\longrightarrow \infty$}
    \end{pspicture}
    }
    \vspace{2pt}
\caption{Extending Jost functions across the horizons for $r_+ < r' < \infty$. At $r_-$ no generic linear combinations are possible.
    The green lines indicates the extensions left and right of $r'$. Note that due to the symmetric behavior of the Regge-Wheeler coordinate
    we use the same notation for the constructed Jost solutions from inside and outside the horizons (see Lemma~\ref{jost} and~\ref{defJost}).}
    \label{fig1}
\end{figure}%

We next consider the case~$r_- < r' \leq r_+$. By the same argumentation as above, there are no non-trivial
solutions crossing the Cauchy horizon.
Therefore, the wave needs to vanish at $r'$ and is continuously extended by zero to $r_0$.
This leads us to the above ansatz where both fundamental solutions are considered only in the region~$r>r'$.

For~$r>r_+$, on the other hand, we have two possible functions. Since $\widecheck{\Jost{+}}$ is a constant function it
needs to be the same inside and outside the event horizon restricting $c = c'$, but the decaying solution can have two different
prefactors in the linear combination for waves. This gives us enough freedom to match the solutions.
The different fundamental solutions are illustrated in Figure~\ref{fig2}.
We begin with the most general linear combination and simplify it afterward.
    It has the form
\begin{align}
        \Phi^{r_+}(r) = \Theta(r - r')\bigg[ &c_L \, \Theta(r_+ - r ) \widehat{\Jost{+}}(r) + c \, \Theta(r_+ - r)
         \widecheck{\Jost{+}}(r)  \nonumber \\
        &+ c' \, \Theta(r-r_+) \:\widecheck{\Jost{+}}(r) + c_R \, \Theta(r-r_+)\widehat{\Jost{+}}(r) \nonumber \\
        &+ a \, \Theta(r - r_+) \:\widehat{\Jost{\infty}}(r)\bigg] \:. \label{matching}
\end{align}
    \begin{figure}
        \psscalebox{1.0 1.0} 
        {
        \begin{pspicture}(0,-4.01)(15.42,1.23)

        \definecolor{grey}{rgb}{0.7019608,0.7019608,0.7019608}
        \definecolor{purp}{rgb}{0.3019608,0.2,0.6}

        \psline[linecolor=black, linewidth=0.04](0.00,-3.98)(15.41,-3.98)
        \psline[linecolor=black, linewidth=0.04](1,-3.98)(1,-1.00)
        \psline[linecolor=black, linewidth=0.04, linestyle=dotted, dotsep=0.10583334cm](4,-3.98)(4,-1.00)
        \psline[linecolor=black, linewidth=0.04, linestyle=dotted, dotsep=0.10583334cm](10,-3.98)(10,-1.00)
        \psline[linecolor=grey, linewidth=0.04, linestyle=dashed, dash=0.17638889cm 0.10583334cm](6, -3.98)(6,-1)

        \pszigzag[coilarm=0, coilwidth=0.20, linecolor = red ](10,-3.98)(11,-2.98)
        \pszigzag[coilarm=0, coilwidth=0.20, linecolor = red ](10,-3.98)(9,-2.98)
        \pszigzag[coilarm=0, coilwidth=0.20, linecolor = red ](9,-1.75)(11,-1.75)
        \rput[bl](10.25,-1.5){$c$}
        \rput[bl](9.5,-1.5){$c'$}
        \rput[bl](10.25,-3.25){$c_R$}
        \rput[bl](9.5,-3.25){$c_L$}
        \rput[bl](8.5,-1.9){$\widecheck{\Jost{+}}$}
        \rput[bl](8.6,-3.7){$\widehat{\Jost{+}}$}

        \pszigzag[coilarm=0, coilwidth=0.20, linecolor = blue ](3,-1.90)(5,-1.90)
        \rput[bl](3.25,-1.65){$\widecheck{\Jost{-}}$}

        \pszigzag[coilarm=0, coilwidth=0.20, linecolor = orange ](1,-2.2)(2,-2.2)
        \rput[bl](1.25,-3){$\Jost{\partial M}$}

        \pszigzag[coilarm=0, coilwidth=0.20, linecolor = brown ](13,-2.5)(14.5,-3.5)
        \rput[bl](14.5,-3.0){$\widehat{\Jost{\infty}}$}

        \psline[linecolor=green, linewidth=0.05](1,-3.98)(6,-3.98)
        \psline[linecolor=green, linewidth=0.05](6,-3.2)(8,-2,25)
        \psline[linecolor=green, linewidth=0.05](11.25,-1.75)(12.5,-2.0)

        \psline[linecolor=black, linewidth=0.04, arrowsize=0.05291667cm 2.0,arrowlength=1.4,arrowinset=0.0]{<-}(2.5,-1.75)(3,-0.75)
        \rput[bl](3.2,-0.5){\makebox[0pt]{only non-generic choice}}

        \rput[bl](0.9,-4.4){$r_0$}
        \rput[bl](3.9,-4.4){$r_-$}
        \rput[bl](9.9,-4.4){$r_+$}
        \rput[bl](5.9,-4.4){$r^{\prime}$}
        \rput[bl](15,-4.4){$\longrightarrow \infty$}
        \end{pspicture}
        }
        \vspace{2pt}
        \caption{Extending Jost functions across the horizons for $r_- < r' \leq r_+$. At $r_+$ generic linear combinations are possible
        which is not the case at $r_-$ with only one Jost function. The green lines indicates the extensions left and right of $r'$.
        The zigzag lines imply possible Jost functions. Note that $\widehat{\Jost{+}}$ stands for two possible
        Jost solutions for $\im(\omega_\varepsilon) < 0$\\ (see Lemma~\ref{defJost}).}
        \label{fig2}
    \end{figure}%
    Now we can express $\widehat{\Jost{\infty}}(r)$ as a linear combination of $\widecheck{\Jost{+}}(r)$ and $\widehat{\Jost{+}}(r)$.
    This results in
    \be
        \Phi^{r_+}(r) = \Theta(r-r')\bigg[&c \, \Theta(r_+ - r) \widecheck{\Jost{+}}(r) +
                a \, \Theta(r - r_+) \widehat{\Jost{\infty}}(r)\nonumber \\
                &+ c_L \:\Theta(r_+ - r) \widehat{\Jost{+}}(r) \bigg] \nonumber
    \ee
with two free parameters~$a, c_L \in \C$ (the parameter~$c$, on the other hand, is determined by the
matching conditions on the event horizon). Thus, we have the solutions
\begin{align*}
    \Phi^{r_+}_1(r) &= c_L \:\Theta(r_+ - r) \widehat{\Jost{+}}(r) \\
    \Phi^{r_+}_2(r) &= c \, \Theta(r_+ - r) \widecheck{\Jost{+}}(r) +
        a \, \Theta(r - r_+) \widehat{\Jost{\infty}}(r)
\end{align*}

It remains to consider the case~$r_0 \leq r' \leq r_-$. On the left side of $r'$ the boundary
conditions at~$r_0$ admit, up to a multiple, a unique solution, i.e.
\[ \Phi_1^{r_-}(r) = e\: \mathcal{J}_{\partial M}(r) \:. \]
For~$r>r_-$, only the solution~$\widecheck{\mathcal{J}}_{-}(r)$ can be extended
in $L^2$ across the Cauchy horizon. This solution must be extended
also across the event horizon and must be matched to the decaying solution
(see Figure~\ref{fig3} for a detailed sketch).
Now the matching across the Cauchy horizon determines the constants~$c_L$ and~$c$,
whereas the matching across the event horizon determines~$a$ and~$c_R$. We conclude that
\begin{align*}
        \Phi^{r_-}_{1}(r) &= e \, \Jost{\partial M}(r) \notag \\
        \Phi^{r_-}_{2}(r) &=
        d \, \Theta(r-r_-)\:\widehat{\Jost{-}}(r) + c_L \, \Theta(r-r_-)\: \Theta(r_+ - r ) \widehat{\Jost{+}}(r) \notag \\
        &+ c_R \, \Theta(r - r_+) \widehat{\Jost{+}}(r) + c \, \Theta(r-r_-)\, \Theta(r_+ - r)\:
        \widecheck{\Jost{+}}(r) \notag \\
                &+ a \, \Theta(r - r_+) \widehat{\Jost{\infty}}(r)
\end{align*}
\begin{figure}
    \psscalebox{1.0 1.0} 
    {
    \begin{pspicture}(0,-4.01)(15.42,1.23)

    \definecolor{grey}{rgb}{0.7019608,0.7019608,0.7019608}
    \definecolor{purp}{rgb}{0.3019608,0.2,0.6}

    \psline[linecolor=black, linewidth=0.04](0.00,-3.98)(15.41,-3.98)
    \psline[linecolor=black, linewidth=0.04](1,-3.98)(1,-1.00)
    \psline[linecolor=black, linewidth=0.04, linestyle=dotted, dotsep=0.10583334cm](4,-3.98)(4,-1.00)
    \psline[linecolor=black, linewidth=0.04, linestyle=dotted, dotsep=0.10583334cm](10,-3.98)(10,-1.00)
    \psline[linecolor=grey, linewidth=0.04, linestyle=dashed, dash=0.17638889cm 0.10583334cm](2.5, -3.98)(2.5,-1)

    \pszigzag[coilarm=0, coilwidth=0.20, linecolor = red ](10,-3.98)(11,-2.98)
    \pszigzag[coilarm=0, coilwidth=0.20, linecolor = red ](10,-3.98)(9,-2.98)
    \pszigzag[coilarm=0, coilwidth=0.20, linecolor = red ](9,-1.75)(11,-1.75)

    \rput[bl](10.25,-1.5){$c$}
    \rput[bl](9.5,-1.5){$c'$}
    \rput[bl](10.25,-3.25){$c_R$}
    \rput[bl](9.5,-3.25){$c_L$}
    \rput[bl](8.5,-1.9){$\widecheck{\Jost{+}}$}
    \rput[bl](8.6,-3.7){$\widehat{\Jost{+}}$}

    \pszigzag[coilarm=0, coilwidth=0.20, linecolor = blue ](2.5,-1.90)(5,-1.90)
    \rput[bl](4.25,-1.65){$d$}
    \rput[bl](3.25,-1.65){$\widecheck{\Jost{-}}$}

    \pszigzag[coilarm=0, coilwidth=0.20, linecolor = orange ](1,-2.2)(2.5,-2.2)
    \rput[bl](1.25,-3){$\Jost{\partial M}$}


    \pszigzag[coilarm=0, coilwidth=0.20, linecolor = brown ](13,-2.5)(14.5,-3.5)
    \rput[bl](14.5,-3.0){$\widehat{\Jost{\infty}}$}

    \psline[linecolor=green, linewidth=0.05](5.5,-1.90)(8.5,-1.75)
    \psline[linecolor=green, linewidth=0.05](11.25,-1.75)(12.5,-2.0)

    \psline[linecolor=black, linewidth=0.04, arrowsize=0.05291667cm 2.0,arrowlength=1.4,arrowinset=0.0]{<-}(2.6,-1.75)(3.1,-0.75)
    \rput[bl](3.2,-0.5){\makebox[0pt]{discontinuity}}

    \rput[bl](0.9,-4.4){$r_0$}
    \rput[bl](3.9,-4.4){$r_-$}
    \rput[bl](9.9,-4.4){$r_+$}
    \rput[bl](2.5,-4.4){$r^{\prime}$}
    \rput[bl](15,-4.4){$\longrightarrow \infty$}
    \end{pspicture}
    }
    \vspace{2pt}
    \caption{Extending Jost functions across the horizons for $r_0 \leq r' \leq r_-$. Again, t $r_+$ generic linear combinations are possible.
    This is not the case at $r_-$ with only one Jost function. The green lines indicates the extensions left and right of $r'$.
    This time, due to the discontinuity at $r = r'$ non generic linear combinations are possible.}
    \label{fig3}
\end{figure}%
    with two free parameter $d,e \in \C$ (and~$c_L, c_R, c$ and~$a$ fixed by the matching conditions).\\

    \item[{\rm{(ii)}}]
    $\im (\omega_\varepsilon) > 0 \text{ and } \re (\omega_\varepsilon) < 0 \text{ for }
    |\omega_\varepsilon| >  m $:\\
    The procedure to find the fundamental solutions in this case is similar as in the first case.
    One only needs to keep in mind that the roles of the event and Cauchy horizons are
    interchanged, in the sense that at the Cauchy horizon, both fundamental solutions are in $L^2$,
    whereas at the event horizon, one fundamental solution is singular.
\end{enumerate}
The remaining cases are similar. For $|\omega_\varepsilon| < m$ we need to remember that we only have the decaying solution $\Jost{\infty}$ at infinity.
We summarize the results in following lemma.

\begin{Lemma}
    \label{fundSolSum}
    The extended Jost solutions $\Phi_1$ and $\Phi_2$ can be expressed as
	\begin{enumerate}[leftmargin=1em]
        \item[{\rm{(i)}}] $\im (\omega_\varepsilon) < 0 \text{ and } \re(\omega_\varepsilon) < 0 \text{ and }
            |\omega_\varepsilon| > m$:\\[0.1em]
        \begin{itemize}
            \item $\Phi^{\infty}_1(r) = c_1 \, \Theta(r - r_+)\widehat{\Jost{+}}(r)$\\[0.1em]
            $\Phi^{\infty}_2(r) = c_2 \, \widehat{\Jost{\infty}}(r)$ \\

            \item $\Phi^{r_+}_1(r) = c_1 \, \Theta(r_+ - r) \widehat{\Jost{+}}(r) $\\[0.1em]
            $\Phi^{r_+}_2(r) = c_2\, \Theta(r - r_+) \widehat{\Jost{\infty}}(r) + a_1 \, \Theta(r_+ - r)
            \widecheck{\Jost{+}}(r)$ \\
            \item $\Phi^{r_-}_{1}(r) = c_1 \, \Jost{\partial M}(r)$\\[0.1em]
            $\Phi^{r_-}_{2}(r) =  c_2 \, \Theta(r - r_-) \widecheck{\Jost{-}}(r) + a_1 \,\Theta(r - r_-)\Theta(r_+ - r)\widehat{\Jost{+}}\\
                 \text{}\qquad \quad\;\;\;\; + a_2 \, \Theta(r - r_+) \widehat{\Jost{+}}(r) + a_3 \, \Theta(r-r_-)\Theta(r_+-r)\widecheck{\Jost{+}}(r) + a_4 \, \Theta(r - r_+) \widehat{\Jost{\infty}}(r)$ \\
        \end{itemize}
Here~$c_1$ and~$c_2$ are free parameters, and~$a_1, \ldots, a_4$ are constants which depend
on~$c_1$, $c_2$ and~$\omega_\varepsilon$. \\
        \item[{\rm{(ii)}}] $\quad \im (\omega_\varepsilon) > 0 \text{ and } \re (\omega_\varepsilon) < 0 \text{ and }
            |\omega_\varepsilon| >  m $:\\[0.1em]
        \begin{itemize}
            \item $\Phi^{\infty}_1(r) = c_1 \, \Theta(r - r_+)\widecheck{\Jost{+}}(r) + a_1 \, \Theta(r - r_-)\Theta(r_+ - r)\widecheck{\Jost{-}}(r)\\
                \text{}\qquad \quad + a_2 \, \Theta(r - r_-)\Theta(r_+ - r)\widehat{\Jost{-}}(r) + a_3\, \Theta(r_- - r)\widehat{\Jost{-}}(r) + a_4 \, \Theta(r_- - r) \Jost{\partial M}(r)$\\[0.1em]
                $ \Phi^{\infty}_2(r) = c_2 \, \widecheck{\Jost{\infty}}(r)$\\

            \item $\Phi^{r_+}_1(r) = c_1 \, \Theta(r_- - r) \Jost{\partial M} + a_1 \, \Theta(r - r_-) \widecheck{\Jost{-}}(r)$\\
                  $\Phi^{r_+}_2(r) = c_2 \, \Theta(r -r_-) \widehat{\Jost{-}}(r)$\\[0.1em]

            \item $\Phi^{r_-}_1(r) = c_1 \, \Theta(r_- - r)\Jost{\partial M}(r)$\\[0.1em]
                  $\Phi^{r_-}_2(r) = c_2 \, \Theta(r_- -r) \widehat{\Jost{-}}(r)$\\
        \end{itemize}
Here~$c_1$ and~$c_2$ are again free parameters, and~$a_1, \ldots, a_4$ are constants which depend
on~$c_1$, $c_2$ and~$\omega_\varepsilon$. \\
        \item[{\rm{(iii)}}] $\im (\omega_\varepsilon) < 0 \text{ and } \re (\omega_\varepsilon) > 0 \text{ and }
            |\omega_\varepsilon| >  m $:\\[0.2em]
        \begin{itemize}
            \item Similar to the first case, but with $\widehat{\Jost{\infty}}$ and $\widecheck{\Jost{\infty}}$ interchanged when
                $|\omega_\varepsilon| >  m$.\\[0.1em]
        \end{itemize}

        \item[{\rm{(iv)}}] $ \im (\omega_\varepsilon) > 0 \text{ and } \re (\omega_\varepsilon) > 0 \text{ and }
            |\omega_\varepsilon| >  m $:\\[0.2em]
        \begin{itemize}
            \item Similar to the second case, but with $\widehat{\Jost{\infty}}$ and $\widecheck{\Jost{\infty}}$ interchanged when
            $|\omega_\varepsilon| >  m$.\\[0.1em]
        \end{itemize}

        \item[{\rm{(v)}}] $|\omega_\varepsilon| < m$ with all previous combinations of $\re (\omega_\varepsilon)$ and
        $\im (\omega_\varepsilon)$: \\[0.2em]
            \begin{itemize}
                \item All four cases \rm{(i)}, \rm{(ii)}, \rm{(iii)} and \rm{(iv)} are repeated, but with $\widehat{\Jost{\infty}}$ and $\widecheck{\Jost{\infty}}$ replaced by $\Jost{\infty}$.
            \end{itemize}
    \end{enumerate}
\end{Lemma}


\begin{Remark} (matching conditions and weak solutions) {\em{ We now explain how the
matching conditions derived above can be understood from the perspective of weak solutions
of the Dirac equation. We only consider the event horizon, noting that the Cauchy horizon can be
treated similarly. In~\eqref{matching} we began with a general ansatz for the solution.
Our matching conditions stated that this solution must be continuous across the horizon,
meaning that
\beq \label{matchingsep}
c = c' \:,
\eeq
whereas the prefactors~$c_L$ and~$c_R$ of the solution~$\widehat{\Jost{+}}$ can be chosen arbitrarily and independently
inside and outside the event horizon. This is illustrated in Figure~\ref{fig2}.
An alternative method for deriving these matching conditions is to work out the corresponding
Dirac solution~$\psi$ in~\eqref{direq} by inserting the fundamental solutions into the separation
ansatz~\ref{waveFunc} and~\eqref{dirtrans}. Evaluating the Dirac equation weakly across
the event horizon (similar as is done in Schwarzschild and Boyer-Lindquist coordinates in~\cite{FSYperiodic, kerr}),
one would again get~\eqref{matchingsep}. For brevity we omit the details of this computation.

These matching conditions correspond to the following physical picture. The fundamental
solution~$\widecheck{\Jost{+}}$ describes a wave which crosses the event horizon.
Therefore, current conservation gives rise to a matching condition for this solution.
The fundamental solution~$\widehat{\Jost{+}}$, however, describes a wave which
propagates along the event horizon, but does not cross it. Therefore, we
do not get a matching condition. }} \QEDrem
\end{Remark}

\subsection{Computation of the Green's Matrix}
In this subsection,
we will use the global Jost solutions constructed above to compute the Green's matrix.
For scalar Jost solutions, this construction is well-known and uses the conservation of the Wronskian.
In our setting of two-component solutions, we make use instead of the
conservation of the Dirac current~$\Sl \Phi , G^{\mu} \Phi \Sr$ in radial directions within the given region~$(r_-, r_+)$ and $(r_+, \infty)$.
More precisely, we define a Gram matrix, referred to as the {\em{radial flux matrix}}.
We want to highlight that the corresponding local conservation law holds only
after taking the limit $\varepsilon \searrow 0$ (otherwise the currents on different time slices do not cancel each other).
\begin{Lemma}
    \label{lemmaRadialFlux}
In the limit~$\varepsilon \searrow 0$, the radial flux matrix~$h_{ij}=h_{ij}(c_1, c_2, \omega_\varepsilon)$ defined by
    \be \label{414}
        \bra \Phi_i(r), A \, \Phi_j(r) \ket_{\C^2} =: h_{ij} \quad \text{with} \quad i,j \, \in \{1,2\}
    \ee
and the matrix~$A$ given by
    \[
        A = \begin{bmatrix}
            1 & 0 \\
            0 & -\epsilon(\Delta)\\
        \end{bmatrix}
    \]
    is independent of $r$ in the respective regions~$(r_0, r_-)$, $(r_-, r_+)$ and $(r_+, \infty)$.
\end{Lemma}
\begin{proof}
 We begin with the case~$r \in (r_+, \infty)$. By acting with the partial derivative of $r$ we end up with
    \bes
        \Delta(r)\,\partial_r \bra \Phi(r), A \, \Phi(r) \ket_{\C^2} = \Delta(r) \bra \Phi(r) , \underbrace{(V^\dagger A + A V)}_{= 0} \Phi(r) \ket_{\C^2} = 0 \quad \forall\, r\:,
    \ees
    where $V$ is the matrix on the right side of~\eqref{radialEq}. A straightforward calculation gives the result. The steps in the other regions~$r \in (r_0, r_-)$ and~$r \in (r_-, r_+)$ are identical.
\end{proof}
We remark that the conservation of the radial flux~\eqref{414} was first observed in~\cite{FSYperiodic}
and used in order to rule out non-trivial
time-periodic solutions of the Dirac equation in the exterior Reissner-Nordstr\"om
geometry. Here we can use this conservation law
in order to show that the extended Jost solutions~$\Phi_1$ and~$\Phi_2$
constructed in Section~\ref{prepJost} are linearly independent:
\begin{Lemma} \label{lemmaindepend} In the limit~$\varepsilon \searrow 0$,
the two solutions~$\Phi_1$ and~$\Phi_2$ in Lemma~\ref{fundSolSum}
are linearly independent.
\end{Lemma}
\Proof The case~$|\omega|<m$ is trivial, because~$\Phi_1$ is exponentially decreasing at infinity,
whereas~$\Phi_2$ exponentially increasing.
As the remaining cases can be treated similarly, we only consider case~(i) in Lemma~\ref{fundSolSum}.
Then the radial flux of~$\Phi_2$ can be computed asymptotically as~$r \rightarrow \infty$
using~\eqref{asy414}.
The radial flux of~$\Phi_1$, on the other hand, can be computed at the event horizon
using the asymptotics in Lemma~\ref{defJost}~(iii). If~$\Phi_1$ and~$\Phi_2$ were linearly dependent,
these radial fluxes would have opposite signs, a contradiction.
\QED

Following up, we make the ansatz for $G(r;r')$ when $r' \in (r_+, \, \infty)$.
\be
    \label{ansatz1}
\begin{split}
    G(r,r') &:= \dfrac{\Theta(r-r')}{\Delta(r')} \sum_{j = 1}^2 c_{1j}\:\Phi_1(r) \otimes \big(A(r') \Phi_j(r')\big)^\dagger \\
    &\quad\; +
        \dfrac{\Theta(r'-r)}{\Delta(r')} \sum_{j = 1}^2 c_{2j}\:\Phi_2(r) \otimes \big(A(r') \Phi_j(r')\big)^\dagger
\end{split}
\ee
and for $r' \in (r_-, \, r_+)$
\be
    \label{ansatz2}
    G(r,r') :=  \dfrac{1}{\Delta(r')}
    \begin{cases}
        \Theta(r-r') \sum_{i,j = 1}^2 c_{ij}\Phi_i(r) \otimes \big(A(r') \Phi_j(r')\big)^\dagger \quad \text{for} \quad \im{\omega_\varepsilon} < 0 \\
        \Theta(r'-r) \sum_{i,j = 1}^2 c_{ij}\Phi_i(r) \otimes \big(A(r') \Phi_j(r')\big)^\dagger \quad \text{for} \quad \im{\omega_\varepsilon} > 0 \\
    \end{cases}
\ee
with $c_{ij} \in \, \C$. Note that the matrix~$A(r')$ is constant within the respective regions. Thus, in all follow up computations
we drop the~$r'$ dependency. Additionally, we want to highlight that the above tensor notation corresponds to the bra-ket notation
\[
    X \otimes Y^\dagger = \Ket{X}\Bra{Y} .
\]

\begin{Lemma}
    \label{greenCoeff}
    The Green's matrix is well-defined and bounded in the three regions. Additionally, the coefficients $c_{ij}$ for $r' \notin (r_-, \, r_+) $ are given by
\[ c_{ij} = \begin{bmatrix}
            h^{11} & h^{12} \\
            -h^{21} & -h^{22} \\
        \end{bmatrix} , \]
    where $h^{ij}$ is the inverse matrix of $h_{ij}$ from Lemma~\ref{lemmaRadialFlux}. In the case $r' \in (r_-, \, r_+) $,
    one ends up with
\[ c_{ij} =
        \begin{bmatrix}
            h^{11} & h^{12} \\
            h^{21} & h^{22} \\
        \end{bmatrix} \text{ for } \im{\omega_\varepsilon} < 0 \quad \text{and} \quad
        c_{ij} =
        \begin{bmatrix}
            -h^{11} & -h^{12} \\
            -h^{21} & -h^{22} \\
        \end{bmatrix} \text{ for } \im{\omega_\varepsilon} > 0 \:. \]
\end{Lemma}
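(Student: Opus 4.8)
The plan is to substitute the gluing ans\"atze~\eqref{ansatz1} and~\eqref{ansatz2} into the defining distributional equation~\eqref{eqDistributional} and to read off the $c_{ij}$ by matching the resulting delta distribution against $\delta(r-r')\,\Idmat_{\C^2}$. The decisive structural observation is that both diagonal entries of the radial operator carry the \emph{same} derivative term $\Delta(r)\,\partial_r$, so we may write
\begin{align*}
\rr(\partial_r;r) = \Delta(r)\,\partial_r\,\Idmat_{\C^2} - V(r) \,,
\end{align*}
with $V$ the purely algebraic (multiplication) matrix on the right side of~\eqref{radialEq} that already appears in the proof of Lemma~\ref{lemmaRadialFlux}. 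Since each Jost solution solves the homogeneous equation, $\Delta\,\partial_r\Phi_i = V\Phi_i$, every term in which $\partial_r$ hits a factor $\Phi_i(r)$ cancels against $V\Phi_i$. Hence only the derivatives of the step functions $\Theta(\pm(r-r'))$ survive.

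First I would carry out this differentiation in the generic case $r'\notin(r_-,r_+)$. Using $\partial_r\Theta(r-r')=\delta(r-r')$, $\partial_r\Theta(r'-r)=-\delta(r-r')$ and $\Delta(r)\,\delta(r-r')=\Delta(r')\,\delta(r-r')$, the factor $1/\Delta(r')$ in~\eqref{ansatz1} is cancelled and one is left with
\begin{align*}
\rr(\partial_r;r)\,G(r,r') = \delta(r-r')\sum_{j=1}^2\big[\,c_{1j}\,\Phi_1(r') - c_{2j}\,\Phi_2(r')\,\big]\otimes\big(A\Phi_j(r')\big)^\dagger \,.
\end{align*}
For $r'\in(r_-,r_+)$ the ansatz~\eqref{ansatz2} carries a single step function, so the identical computation produces $+\delta(r-r')\sum_{i,j}c_{ij}\,\Phi_i(r')\otimes(A\Phi_j(r'))^\dagger$ when $\im\OmegaVar<0$ and the opposite sign when $\im\OmegaVar>0$, the extra minus originating from $\Theta(r'-r)$.

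It then remains to match these expressions with $\delta(r-r')\,\Idmat_{\C^2}$, for which I would establish the completeness (resolution of the identity) relation
\begin{align*}
\sum_{i,j=1}^2 h^{ij}\,\Phi_i(r)\otimes\big(A\Phi_j(r)\big)^\dagger = \Idmat_{\C^2} \,,
\end{align*}
where $h^{ij}$ is the inverse of the radial flux matrix of Lemma~\ref{lemmaRadialFlux}. Collecting $\Phi:=[\,\Phi_1\mid\Phi_2\,]$ and using $A^\dagger=A$ one has $h=\Phi^\dagger A\,\Phi$, and the left-hand side of the displayed identity equals $\Phi\,h^{-1}\Phi^\dagger A = \Phi(\Phi^\dagger A\Phi)^{-1}\Phi^\dagger A = \Idmat_{\C^2}$. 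Comparing the coefficients of the linearly independent dyads $\Phi_i\otimes(A\Phi_j)^\dagger$ then forces $c_{1j}=h^{1j}$ and $c_{2j}=-h^{2j}$ for $r'\notin(r_-,r_+)$, and $c_{ij}=h^{ij}$ respectively $c_{ij}=-h^{ij}$ in the two subcases $r'\in(r_-,r_+)$, which is exactly the assertion.

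The main obstacle is not the algebra but the justification that $h_{ij}$ is invertible, i.e.\ that the two Jost solutions selected in a given region form a fundamental system. Invertibility follows from $\det h = |\det\Phi|^2\,\det A$ together with $\det A = -\epsilon(\Delta)\neq 0$, so it reduces to the linear independence of $\Phi_1$ and $\Phi_2$; this is where the distinct asymptotics of Lemma~\ref{defJost} and the concrete choices of Lemma~\ref{fundSolSum} enter, and where one must exclude the non-generic values of $\omega$ at which the retained solutions degenerate. The remaining ``well-defined and bounded'' claim then amounts to verifying that the solutions kept in each of the three regions~\eqref{regions} lie in the appropriate $L^2$-spaces across the horizons — precisely the square-integrability conditions built into the construction — so that the dyadic products are finite for $r\neq r'$ and~\eqref{eqDistributional} holds in every region.
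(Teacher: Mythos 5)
Your proposal is correct and follows essentially the same route as the paper: insert the gluing ans\"atze into~\eqref{eqDistributional}, use that the $\Phi_i$ solve the homogeneous equation so that only the $\Theta$-derivatives produce $\delta$-terms, and match against the completeness relation $\sum_{i,j}h^{ij}\,\Phi_i\otimes(A\Phi_j)^\dagger=\Idmat_{\C^2}$ to read off the $c_{ij}$. The only difference is one of detail rather than method: you additionally verify the completeness relation via $\Phi(\Phi^\dagger A\Phi)^{-1}\Phi^\dagger A=\Idmat$ and address the invertibility of $h_{ij}$ (linear independence of $\Phi_1,\Phi_2$ away from non-generic $\omega$), points the paper's proof asserts without comment.
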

\begin{proof} We point out that the Gram matrix~$h_{ij}$ is invertible because
the~$\Phi_1$ and~$\Phi_2$ in~\eqref{414} are two fundamental solutions and~$A$ is a regular matrix.
    The proof is a straightforward computation. We want to solve the distributional equation~\eqref{eqDistributional}
    \[
        \rr(\partial_r ; r) \:G(r ; r')_{\omega_\varepsilon} = \delta(r - r') \Idmat_{\C^2}
    \]
    Inserting the ansatz for $r' \notin (r_-, \, r_+) $ we end up with
\begin{align*}
    \rr(\partial_r ; r) \:G(r ; r')_{\omega_\varepsilon} = &\delta(r-r') \bigg[ c_{11} \Phi(r)_1 \otimes \big(A \Phi(r)_1\big)^\dagger +
        c_{12} \Phi(r)_1 \otimes \big(A \Phi(r)_2\big)^\dagger \notag \\
        &- c_{21} \Phi(r)_2 \otimes \big(A \Phi(r)_1\big)^\dagger - c_{22} \Phi(r)_2 \otimes \big(A \Phi(r)_2\big)^\dagger \bigg]
\end{align*}
    Additionally we have following completeness relation on our Hilbert space
\[ \sum_{i,j} h^{ij} \:\Phi(r)_i \otimes \big( A \Phi(r)_j \big)^\dagger = \Idmat_{\C^2} \]
with
\[ h^{ij} = \frac{1}{\det{h}}
        \begin{bmatrix}
            h_{22} & - h_{12} \\
            -h_{21} & h_{11} \\
        \end{bmatrix} \:. \]
Combining both parts gives the result. In the case $r' \in (r_-, \, r_+) $, the steps are identical.

By Lemma~\ref{fundSolSum} we have globally well-defined Jost solutions which are all bounded. Together with the computed coefficients one sees that all Green's matrices are bounded in the corresponding interval for $r'$.
\end{proof}

\section{Integral Representation of the Dirac Propagator} \label{secintrep}
\subsection{Abstract Representation}
We can now state a first step towards our main result.
\begin{Prp}
    \label{mainProp}
    Let $H$ be the Hamiltonian of the Dirac equation in the Reissner-Nordström geometry in Eddington-Finkelstein coordinates.
    Then the corresponding Dirac propagator has the integral representation
\[ X(\tau, r) = \dfrac{1}{2\pi}\int_{\R} e^{-i \omega \tau} \lim_{\varepsilon \searrow 0} \bigg[
                \big(H_\xi - \Idmat_{\C^2}(\omega - i\varepsilon) \big)^{-1} - \big(H_\xi + \Idmat_{\C^2}(\omega - i\varepsilon)\big)^{-1}\bigg] \, X_0(r) \, \dif{\omega}\:. \]
Here the resolvent can be expressed in forms of the Green's matrices $G(r;r')_{\omega_\varepsilon}$ from the ansatz~\eqref{ansatz1} and~\eqref{ansatz2} plus
the coefficients from Lemma~\ref{greenCoeff}. It has the expression
\[ (H_k - \omega \pm i\varepsilon )^{-1} \, X_0(r) =
        \int_{r_0}^{\infty} G(r;r')_{\omega_\varepsilon}\:C(r') \,
        X_0(r') \, \dif{r'} \]
    with
\[ C(r') =
        \begin{bmatrix}
            2r'^2 - \Delta(r')& 0 \\
            0 & -  \Delta(r') \\
        \end{bmatrix} \:, \]
    where $X_0 \in C^\infty_\text{\rm{init}}(N)$ is the initial data for a fixed angular momentum mode $k,l$.
\end{Prp}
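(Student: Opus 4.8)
The plan is to obtain Proposition~\ref{mainProp} by assembling the three representations already established and then controlling the two limiting procedures. First I would start from the Stone-formula representation of Lemma~\ref{lemmaResolavente}, which expresses the solution as
\[
X(\tau,r)=\frac{1}{2\pi i}\,\lim_{a\to\infty}\lim_{\varepsilon\searrow 0}\int_{-a}^{a} e^{-i\omega\tau}\Big[(H_\xi-\omega-i\varepsilon)^{-1}-(H_\xi-\omega+i\varepsilon)^{-1}\Big]X_0(r)\,\dif{\omega}\:,
\]
i.e.\ as the difference of the resolvents evaluated in the upper and lower half-planes, the spectral parameter being $\OmegaVar=\omega\mp i\varepsilon$. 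Into this I would substitute the explicit action of each resolvent provided by Lemma~\ref{propResolvent}, namely $(H_\xi-\OmegaVar\Idmat_{\C^2})^{-1}X_0(r)=\int_{r_0}^\infty G(r;r')_{\OmegaVar}\,C(r')\,X_0(r')\,\dif{r'}$, where the Green's matrix $G(r;r')_{\OmegaVar}$ is the one built from the global Jost solutions of Lemma~\ref{fundSolSum} with the coefficients of Lemma~\ref{greenCoeff} via the ansätze \eqref{ansatz1} and \eqref{ansatz2}.

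Next I would reconcile the prefactors and the form of $C$. The matrix $C$ in Lemma~\ref{propResolvent} carries a common factor $i$, so writing $C=i\,\diag(2r^2-\Delta,-\Delta)$ pulls an $i$ out of both resolvents in the bracket; combined with the $\tfrac{1}{2\pi i}$ in front this yields the factor $\tfrac{1}{2\pi}$ together with the matrix $C(r')=\diag(2r'^2-\Delta,-\Delta)$ displayed in the statement. The two remaining resolvents in the bracket are then exactly the Green's-matrix integrals evaluated on the two sides of the real axis, which is the content of the second displayed formula in the proposition.

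The genuine analytic work is in justifying the two interchanges. For the $a\to\infty$ limit I would use that $X_0\in C^\infty_{\rm init}(N)$ has compact support, so the $r'$-integration runs over a fixed compact set, and that the Jost solutions are analytic and smooth in $\OmegaVar$ by Lemma~\ref{jost}; repeated integration by parts in $\omega$ against the oscillatory factor $e^{-i\omega\tau}$ then produces rapid decay in $|\omega|$, allowing the truncated integral to be replaced by the integral over $\R$. For the $\varepsilon\searrow 0$ limit I would invoke dominated convergence, with the dominating bound furnished by the uniform boundedness of the Green's matrix in each of the three regions $(r_0,r_-)$, $(r_-,r_+)$, $(r_+,\infty)$ from Lemma~\ref{greenCoeff}, together with the exponential decay of $\widehat{\Jost{\pm}},\widecheck{\Jost{\pm}}$ at the horizons and the polynomial control at infinity from Lemmas~\ref{jost} and~\ref{defJost}.

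The main obstacle is precisely this last point: producing an $\varepsilon$-uniform, $\omega$-integrable majorant for $G(r;r')_{\OmegaVar}\,C(r')\,X_0(r')$. The difficulty is concentrated near the horizons, where the choice of square-integrable Jost solution switches as $\im\OmegaVar$ crosses $0$ (compare the case distinctions in Lemmas~\ref{defJost} and~\ref{fundSolSum}); one must verify that the coefficients $c_{ij}$, i.e.\ the inverse radial-flux matrix $h^{ij}$ from Lemma~\ref{greenCoeff}, remain bounded as $\OmegaVar$ approaches the real axis, so that $\det h$ does not degenerate and no spurious poles are introduced in the limit. Once boundedness of $h^{ij}$ up to the real axis is secured, Fubini together with the dominated-convergence estimates above permits the interchange of the $r'$-integral with the $\omega$-integral and with the limit, and the stated integral representation follows.
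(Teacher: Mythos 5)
Your assembly of the proof matches the paper's own argument: you combine Lemma~\ref{lemmaResolavente} (Stone's formula), Lemma~\ref{propResolvent} (the resolvent as a Green's-matrix integral) and Lemma~\ref{greenCoeff}, and your bookkeeping of the factor~$i$ in the matrix~$C$ against the prefactor~$\tfrac{1}{2\pi i}$ is exactly the (implicit) reconciliation needed to arrive at the stated~$\tfrac{1}{2\pi}$ and~$C(r')=\diag\big(2r'^2-\Delta,\,-\Delta\big)$. Your treatment of the~$\varepsilon \searrow 0$ interchange is also the paper's argument verbatim: dominated convergence, with the majorant furnished by the boundedness of the extended Jost solutions and of the Green's matrix. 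Note that this boundedness (including the non-degeneration of~$\det h$, i.e.\ of the coefficients~$h^{ij}$, which you flag as the main obstacle) is precisely what Lemma~\ref{greenCoeff} asserts, so within the scope of this proposition you may simply cite it rather than treat it as an open verification.

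The genuine gap is your justification of the~$a \rightarrow \infty$ limit. Integration by parts in~$\omega$ against~$e^{-i\omega\tau}$ does \emph{not} produce decay in~$|\omega|$: each integration by parts yields a factor~$(i\tau)^{-1}$ and moves a~$\partial_\omega$ onto the resolvent kernel, whose~$\omega$-derivatives are at best bounded (smoothness and analyticity of the Jost solutions in~$\OmegaVar$ give boundedness, not decay), so the resulting integrals still fail to converge absolutely; moreover the argument degenerates entirely at~$\tau = 0$. Decay of the spectral density in~$\omega$ is a property of the \emph{initial data}, not of the Jost functions: it would follow from~$X_0$ lying in the domain of every power of~$H$ (the defining property of~$C^\infty_\text{init}(N)$ in~\eqref{boundary}), via~$\int_{\sigma(H)} \omega^{2p} \:\dif{\la X_0, E_\omega X_0 \ra} = \| H^p X_0 \|^2 < \infty$, or equivalently by integrating by parts in~$r'$ using the radial ODE to trade powers of~$\omega$ for derivatives of~$X_0$. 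The paper sidesteps this entirely: the~$a \rightarrow \infty$ limit is taken in the Hilbert-space (strong) sense, where it exists directly by Stone's theorem, i.e.\ by the spectral theorem, since~$E_{(-a,a)} X_0 \rightarrow X_0$ as~$a \rightarrow \infty$. Replacing your oscillatory-integral argument by either of these two mechanisms closes the gap; as written, that step would fail.
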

\begin{proof}
Combining the results from Propositions~\ref{lemmaResolvente}, \ref{propResolvent} and \ref{greenCoeff},
the remaining task is to interchange the limit~$\varepsilon \searrow 0$ with the integral
and then to take the limit~$a \rightarrow \infty$
in~\eqref{Xresvolent}. Since all extended Jost functions are bounded (for details see Subsection~\ref{constructGreen}), we can apply Lebesgue's dominated convergence theorem
to take the limit~$\varepsilon \searrow 0$ inside the integral.
The limit~$a \rightarrow \infty$ exists by Stone's theorem.
\end{proof}
\subsection{Main Theorem}
By further calculations it is possible to bring the result from Proposition~\ref{mainProp} in a much more handy form. We begin by computing
the Green's matrices in the upper and lower complex plane separately and taking the limit $\varepsilon \searrow 0$ for $r' \in (r_+, \infty)$.
Afterward, we will take the difference and find a more compact expression for the integral representation.
In the end we can extends this to $r' \in (r_-, \infty)$.

\begin{Lemma}
    \label{differenceGreen}
    For $r' \in (r_+ , \infty)$ we can express the differences of the Green's function as
\[ \lim_{\varepsilon \searrow 0 }\,G(r; r')_{>0} - \lim_{\varepsilon \nearrow 0 }\,G(r; r')_{<0}
= \sum_{i,j = 1}^2 \dfrac{g_{ij}}{\Delta(r')} \:\chi_{i}(r) \otimes \chi_{j}(r')^\dagger A \]
with coefficients~$g_{ij}$ of the form
\begin{align}
&g_{11} = g_{22} = 1,\; g_{12} = \frac{a}{b}, \; g_{21} = \frac{d}{c} &&
\text{if~$|\omega|>m$} \label{case1} \\
&g_{ij} = f\, \delta_{i,1} \delta_{j,1} && \text{if~$|\omega|<m$} \:, \label{case2}
\end{align}
where~$a,b,c,d,f \in \C$ and~$b,c \neq 0$.
Moreover, $A$ is again the matrix from Lemma~\ref{lemmaRadialFlux}.
    Additionally, $\chi(r) = (\chi_{1}(r), \chi_2(r))^T$ are the limits of the Jost solutions from Lemma~\ref{defJost}
    defined on the real axis and
    $G(r; r')_{>0}$ describes the Green's matrix on the upper, as well as $G(r; r')_{<0}$ on the lower complex half plane.
\end{Lemma}
\begin{proof}
We begin with the case~$|\omega_\varepsilon|>m$. By looking at Lemma~\ref{fundSolSum} with $r' \in (r_+, \infty)$ for $\im(\omega_\varepsilon) > 0$ we get two extended Jost solutions $\Phi_1(r)$ and $\Phi_2(r)$. Since one can express
any Jost solutions as a linear combination of two others, we choose for $\Phi_2(r)$ a different ansatz to simplify the calculations
\beq \label{Phi12up}
    \Phi_1(r) = \widehat{\Jost{\infty}}(r) \quad \text{and} \quad \Phi_2(r) =  a \: \widehat{\Jost{\infty}}(r) + b \: \widecheck{\Jost{\infty}}(r) \:,
\eeq
where the coefficients of the linear combinations are denoted by~$a, \, b \in \C$.
It follows from Lemma~\ref{lemmaindepend} that~$b$ is non-zero.
First we will calculate the coefficients $c_{ij}$ with those functions. After
that we substitute everything into the ansatz~\eqref{ansatz1} and compute the
result.
We want to highlight that the matrix from lemma~\ref{defJost} is
pseudo-orthonormal to our Wronskian product. Because we are in the complex
planes, we need to treat the product $\la U_{\omega_\varepsilon} | A \, U_{\omega_\varepsilon} \ra$ more carefully. We can re-write the hyperbolic functions in exponential functions and expand them in a power series
\be
    U_{\omega_{\epsilon}} \approx U_{|\omega_\varepsilon|} + i\varphi
    (\varepsilon)
    \begin{pmatrix}
        \sinh(|\omega_\varepsilon|) & \cosh(|\omega_\varepsilon|)\\
        \cosh(|\omega_\varepsilon|) & \sinh(|\omega_\varepsilon|)
    \end{pmatrix}. \notag
\ee
Thus, we end up with
\be
    (U_{\omega_\varepsilon})^\dagger \, A \, U_{\omega_\varepsilon}  \approx
    \begin{bmatrix}
        1 & 0 \\
        0 & -1
    \end{bmatrix} +2\, i \, \varphi(\varepsilon)
    \begin{bmatrix}
        0 & 1 \\
        -1 & 0
    \end{bmatrix} + \varphi(\varepsilon)^2
    \begin{bmatrix}
        -1 & 0 \\
        0 & 1
    \end{bmatrix} \notag .
\ee
We are only interested in the term of zeroth-order in $\varepsilon$ because we
will perform the limit $\varepsilon \rightarrow 0$ in the end. Having that in
mind, we can compute the coefficients for the Greens function with our
ansatz~\ref{Phi12up} (note that in the considered region~$r' \in (r_+, \infty)
$, the function~$\Delta
(r')$ is positive).
\be c_{ij} = -\dfrac{1}{|b|^2}
    \begin{bmatrix}
        |a|^2 - |b|^2 & -a \\
        a^* & -1 \\
    \end{bmatrix} .
\ee
Now we begin by computing the first term from \eqref{ansatz1} (denoted by the superscript~$(1)$),
\bes
    G(r,r')_{> 0}^{(1)} &= \dfrac{\Theta(r-r')}{\Delta(r')} \, \sum_{j = 1}^2 c_{1j}\Phi_1(r) \otimes \big(A \Phi_j(r')\big)^\dagger \notag \\
    & = \dfrac{\Theta(r-r')}{\Delta(r')} \:\bigg[\dfrac{a}{b} \: \widehat{\Jost{\infty}}(r) \otimes \big(A \:\widecheck{\Jost{\infty}}(r')\big)^\dagger
        + \widehat{\Jost{\infty}}(r) \otimes \big(A \:\widehat{\Jost{\infty}}(r')\big)^\dagger \bigg] .
\ees
Repeating the steps for the second term gives
\bes
    G(r,r')_{> 0}^{(2)} &= \dfrac{\Theta(r'-r)}{\Delta(r')} \, \sum_{j = 1}^2 c_{2j}\Phi_2(r) \otimes \big(A \Phi_j(r')\big)^\dagger \notag \\
        & = \dfrac{\Theta(r'-r)}{\Delta(r')} \, \bigg[\widecheck{\Jost{\infty}}(r) \otimes \big(A \:\widecheck{\Jost{\infty}}(r')\big)^\dagger
            +  \dfrac{a}{b} \: \widehat{\Jost{\infty}}(r) \otimes \big(A \:\widecheck{\Jost{\infty}}(r')\big)^\dagger \bigg] .
\ees
Taking the sum of both terms leads to
\bes
    G(r,r')_{> 0} & = \dfrac{1}{\Delta(r')}\, \dfrac{a}{b} \: \widehat{\Jost{\infty}}(r) \otimes \big(A \:\widecheck{\Jost{\infty}}(r')\big)^\dagger \notag \\
    &\quad\: + \dfrac{\Theta(r-r')}{\Delta(r')} \: \widehat{\Jost{\infty}}(r) \otimes \big(A \:\widehat{\Jost{\infty}}(r')\big)^\dagger \notag \\
    &\quad\: + \dfrac{\Theta(r'-r)}{\Delta(r')} \: \widecheck{\Jost{\infty}}(r) \otimes \big(A \:\widecheck{\Jost{\infty}}(r')\big)^\dagger \:.
\ees
Since the resolvent is bounded we can take the $\epsilon$-limit inside the integral from Proposition~\ref{mainProp} and apply the limit on the Jost solutions from the upper and
lower complex plane. More importantly, the Jost solutions from the lower and upper plane coincide on the real axis. We define following limits
\bes
    \lim_{\varepsilon \searrow 0} \widehat{\Jost{\infty}}(r) = \lim_{\varepsilon \nearrow 0} \widehat{\Jost{\infty}}(r) =: \chi_1(r) \quad \text{and} \quad
    \lim_{\varepsilon \searrow 0} \widecheck{\Jost{\infty}}(r) = \lim_{\varepsilon \nearrow 0} \widecheck{\Jost{\infty}}(r) =: \chi_2(r) ,
\ees
Therefore, we end up with the result for the upper half plane
\be \label{sol1}
    \lim_{\varepsilon \searrow 0} \, G(r,r')_{> 0} & = \dfrac{1}{\Delta(r')}\, \dfrac{a}{b} \: \chi_1(r) \otimes \chi_2(r')^\dagger A \notag \\
        &\quad\: + \dfrac{\Theta(r-r')}{\Delta(r')} \: \chi_1(r) \otimes \chi_1(r')^\dagger A \notag \\
        &\quad\: + \dfrac{\Theta(r'-r)}{\Delta(r')} \: \chi_2(r) \otimes \chi_2(r')^\dagger A \, .
\ee

In the lower half plane, we take the Jost solutions
\beq \label{Phi12down}
    \Phi_1(r) = \widecheck{\Jost{\infty}}(r) \quad \text{and} \quad \Phi_2(r) =  c \: \widehat{\Jost{\infty}}(r) + d \: \widecheck{\Jost{\infty}}(r) \qquad \text{if~$|\omega| > m$}\:.
\eeq
with $c, \, d\, \in \C$.
It follows from Lemma~\ref{lemmaindepend} that~$c$ is non-zero.
Repeating similar steps for the lower half plane, we end up with
\bes
    \lim_{\varepsilon \nearrow 0} \, G(r,r')_{< 0} & = - \dfrac{1}{\Delta(r')}\, \dfrac{d}{c} \: \chi_2(r) \otimes \chi_1(r')^\dagger A \notag \\
    &\quad\: - \dfrac{\Theta(r'-r)}{\Delta(r')} \: \chi_1(r) \otimes \chi_1(r')^\dagger A \notag \\
    &\quad\: - \dfrac{\Theta(r-r')}{\Delta(r')} \: \chi_2(r) \otimes \chi_2(r')^\dagger A \, .
\ees
Taking the difference and setting the coefficients as the matrix entries gives the result~\eqref{case1}.

In the case~$|\omega_\varepsilon|<m$, however, we choose the same ansatz~\ref{Phi12up} for the positive, but take a different one for the negative complex plane, i.e.
\be \label{Phi12down2}
    \Phi_1(r) = \widehat{\Jost{\infty}}(r) \quad \text{and} \quad \Phi_2(r) =  a \: \widehat{\Jost{\infty}}(r) + b \: \widecheck{\Jost{\infty}}(r) \qquad \text{if~$|\omega| < m$}\:. \notag
\ee
Furthermore, we have the matrix $V_{\omega_\varepsilon}$ in front of our fundamental solution $\widehat{\Jost{\infty}}(r)$. This matrix behaves differently in the Wronskian and results in a mixing of the components of our fundamental solutions
\be
    (V_{\omega_\varepsilon})^\dagger A \, V_{\omega_\varepsilon} =
    \underbrace{\dfrac{i m}{2 \sqrt{m^2 - \omega^2}}}_{=:g(\omega,\, m)}
    \begin{bmatrix}
        0 & -1 \\
        1 & 0
    \end{bmatrix} + \mathcal{O}(\varepsilon). \notag
\ee
Again, we are only interested in the zero order term in~$\varepsilon$. This time, we end up with a coefficients matrix of the form
\[  c_{ij} = \dfrac{g}{|b|^2}
    \begin{bmatrix}
        a^*b - ab^* & -b \\
        - b* & 0
    \end{bmatrix}, \]
for the upper complex plane. Continuing with similar computations, the Greens
function has the form
\begin{align*}
    G(r,r')_{> 0} &= - \dfrac{g}{\Delta(r')} \dfrac{a}{b}\: \widehat{\Jost{\infty}}(r) \otimes \widehat{\Jost{\infty}}(r')^\dagger A  \\
    &-\dfrac{g \, \Theta(r-r')}{\Delta(r')}\: \widehat{\Jost{\infty}}(r) \otimes \widecheck{\Jost{\infty}}(r')^\dagger A \\
    &- \dfrac{g \, \Theta(r'-r)}{\Delta(r')}\: \widecheck{\Jost{\infty}}(r) \otimes \widehat{\Jost{\infty}}(r')^\dagger A \:.
\end{align*}
A similar computation gives the result for the lower complex plane. Taking the difference and the limit $\varepsilon \rightarrow 0$ gives the second result~\eqref{case2},
\[
    \lim_{\varepsilon \searrow 0} G(r,r')_{> 0} - \lim_{\varepsilon \nearrow 0} G(r,r')_{< 0} = \dfrac{1}{\Delta(r')} \, \underbrace{g \bigg(\dfrac{a}{b} - \dfrac{c}{d}\bigg)}_{=: f} \widehat{\Jost{\infty}}(r) \otimes \widehat{\Jost{\infty}}(r')^\dagger A \:. \]
This concludes the proof.
\end{proof}
In a next step we want to compute the spectral measure from the difference of the Green's matrices only outside the black hole. This is possible because the
Green's matrix is evaluated in the resolvent only pointwise, making it possible to split the integral due to linearity.
\begin{Lemma} \label{lemmaSpec1}
The spectral measure of the Dirac Hamiltonian in Eddington-Finkelstein coordinates for $r' \in (r_+,\infty)$ on initial data~$X_0 \in C^\infty_\text{\rm{init}}(N)$ has the form
\[ \dif{E}_{\omega}(X_0)(r) = \sum_{i,j} t_{ij} \, \chi_i(r)  \otimes \int_{r_+}^{\infty} \chi_j(r')^\dagger \, \Gamma(r') \, X_0(r') \, \dif{r'} \, \dif{\omega} \:, \]
where~$t_{ij}$ are the components of the matrix~$T$ given by
\beq \label{Tmatrix0}
T = \left\{ \begin{array}{cl} \begin{pmatrix}  \displaystyle f & 0 \\
0 & 0 \end{pmatrix} & \text{if~$|\omega|<m$} \\[1em]
\begin{pmatrix} \displaystyle 1 & \dfrac{a}{b} \\
\dfrac{d}{c} & 1 \end{pmatrix} & \text{if~$|\omega|>m$}\:. \end{array} \right.
\eeq
Moreover, $\Gamma := \Gamma \big|_{2 \times 2}$ is the upper left $2\times2$-block of the matrix in~\eqref{scalMat}.
\end{Lemma}
\begin{proof}
    By linearity we can pull the difference of the resolvents and by dominated convergence the $\varepsilon$-limit from Proposition~\ref{mainProp}
    into the integral over $r'$. Then we use the result from Lemma~\ref{differenceGreen} and obtain the result by direct computation.
\end{proof}

We next extend the spectral projection to the region~$r' \in (r_-, \infty)$.
\begin{Lemma}
    \label{lemmaSpec2}
    The spectral measure of the Dirac Hamiltonian in Eddington-Finkelstein coordinates on initial data
    $X_0 \in C^\infty_\text{\rm{init}}(N)$ from Lemma~\ref{lemmaSpec1} extends to $r' \in (r_-, \infty)$, i.e.
\beq \label{Edef}
    \dif{E}_{\omega}(X_0)(r) = \sum_{i,j} t_{ij} \, \chi_i(r)  \otimes  \int_{r_-}^{\infty}
    \chi_j(r')^\dagger \, \Gamma(r') \, X_0(r') \, \dif{r'} \, \dif{\omega} \:.
\eeq
Moreover, the matrix~$T$ in~\eqref{Tmatrix0} can be simplified to
\beq \label{Tmatrix}
T = \left\{ \begin{array}{cl} \begin{pmatrix} \displaystyle f & 0 \\
    0 & 0 \end{pmatrix} \text{ with } f \in \R & \text{if~$|\omega|<m$} \\[0.8em]
\begin{pmatrix} \displaystyle 1 & \dfrac{a}{b} \\
\dfrac{a^*}{b^*} & 1 \end{pmatrix} & \text{if~$|\omega|>m$}\:. \end{array} \right.
\eeq
\end{Lemma}
\begin{proof}
Since the spectral projector is a symmetric operator we can interchange the variables $r \leftrightarrow r'$. Therefore, the relation in
Lemma~\ref{lemmaSpec1} also holds if~$r \in (r_+, \infty)$ and~$r' \in (r_- , \infty)$.
With this in mind, it remains to consider the case~$r,r' \in (r_-, r_+)$.
To this end, one repeats the computational steps in Lemma~\ref{differenceGreen} for
the ansatz~\eqref{ansatz2}, again with the same fundamental solutions~$\Phi_1$ and~$\Phi_2$
in~\eqref{Phi12up} and~\eqref{Phi12down}. A straightforward computation shows that
the spectral projector is again of the form as in~\eqref{Edef}.

It remains to show that~$d/c=a^*/b^*$. To this end, we use that the Hamiltonian is symmetric with respect to the conserved scalar product from Lemma~\ref{lemmasprod}.
    By direct computation, one sees that the resolvent needs to be symmetric with respect to the adjoint operator of the underlying Hilbert space. Thus, the equation
    \[
        \big(\Gamma R_{\omega_\varepsilon} \big)^\dagger = \Gamma R_{\overline{\omega_\varepsilon}}
    \]
    needs to be satisfied, which determines the quotient~$d/c = a^*/b^*$ (here the dagger denotes the adjoint with respect to the scalar product $L^2(\dif{r})$). In the case $|\omega_\varepsilon| < m$ we can use the same argumentation from above. This time, we only have an entry in the diagonal of the matrix $T$. Therefore, $f$ needs to be a real number. This gives the result.
\end{proof}

We are now in the position to state the main theorem of this paper.

\begin{Thm}
    \label{mainTheo}
    The Dirac propagator in Proposition~\ref{mainProp} can be expressed in terms of globally defined fundamental solutions $\chi_i(r,\omega)$ for~$i \in \{1,2\}$ and
    $r\, \in (r_-, \infty)$ as
\[ X(\tau, r) = \dfrac{1}{2} \int_{\R \setminus \{\pm m\} } e^{-i\omega \tau} \, \sum_{i=1}^2 \,\widehat{X}_i(\omega)
                \,\chi_i(r,\omega)\:\dif{\omega} \]
        with $\widehat{X}_i(\omega) : \C \rightarrow \C$ smooth functions defined by
\[ \widehat{X}_i(\omega) = \dfrac{1}{(2\pi)^2}\sum_{j=1}^2 t_{ij}\, (\chi_j(\omega)\,|\, X_0)\:. \]
        Here~$(\cdot| \cdot )$ denotes the conserved scalar product on the hypersurfaces defined in~\eqref{scalPro}
        and given more explicitly in~\eqref{skalarprodukt}, restricted
        to the upper left $2 \times 2$ block. Moreover, $t_{ij}$ are again the entries of the matrix~\eqref{Tmatrix}
        and~$X_0(r) \, \in C^\infty_\text{\rm{init}}(N)$. Here~$a,\,b\, \in \C$
are the transmission coefficients of the radial ODE defined in~\eqref{Phi12up}.
\end{Thm}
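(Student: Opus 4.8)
The plan is to assemble the main theorem directly from the spectral representation of the propagator together with the explicit spectral projector computed in the preceding lemmas; the remaining work is to recognise the radial integral as the conserved scalar product and to collect the normalisation constants, so essentially no new analytic estimate is needed.

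First I would start from the spectral representation $X(\tau,r) = \int_{\R} e^{-i\omega\tau}\,\dif{E}_\omega(X_0)(r)$, which for a fixed angular mode is made explicit by Proposition~\ref{mainProp}: the propagator is the frequency integral of the difference of resolvents applied to $X_0$, and by Proposition~\ref{propResolvent} each resolvent is an integral over $r'$ of the Green's matrix against $C(r')\,X_0(r')$. Taking the limit $\varepsilon\searrow0$ and forming the difference, Lemma~\ref{differenceGreen} and its extension Lemma~\ref{lemmaSpec2} identify the spectral projector as
\[ \dif{E}_\omega(X_0)(r) = \sum_{i,j} t_{ij}\,\chi_i(r)\otimes\int_{r_-}^\infty \chi_j(r')^\dagger\,\Gamma(r')\,X_0(r')\,\dif{r'}\,\dif{\omega}\,. \]
Inserting this into the spectral representation already produces the bilinear structure $\sum_{i}\chi_i(r)\sum_j t_{ij}(\ldots)$ of the claimed formula.

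Next I would identify the radial integral $\int_{r_-}^\infty \chi_j(r')^\dagger\,\Gamma(r')\,X_0(r')\,\dif{r'}$ with the conserved scalar product $(\chi_j\,|\,X_0)$ of Lemma~\ref{lemmasprod}, restricted to the upper left $2\times2$ block. The point is that, working in a single angular momentum mode, the angular part of the scalar product~\eqref{skalarprodukt} factors out and integrates to one by the orthonormality~\eqref{angularON}, so that only the radial integral against $\Gamma$ survives; the passage from $C(r')$ in the resolvent to $\Gamma(r')$ in the scalar product is exactly the matching $A\,C(r')/\Delta(r')\propto\Gamma(r')|_{2\times2}$ built into Lemmas~\ref{differenceGreen}--\ref{lemmaSpec2} through the flux matrix $A$. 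With this identification I would set $\widehat{X}_i(\omega) = \tfrac{1}{(2\pi)^2}\sum_j t_{ij}\,(\chi_j(\omega)\,|\,X_0)$ and read off the matrix $T$, its Hermiticity $t_{21}=\overline{t_{12}}$ (equivalently $d/c=\overline{a/b}$) being precisely the relation $a^*c-b^*d=0$ obtained from the conserved radial flux in the proof of Lemma~\ref{lemmaSpec1}.

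The main obstacle I expect is the bookkeeping of the constant factors: the $\tfrac{1}{2\pi}$ from Stone's formula in Proposition~\ref{mainProp}, the factor from the Fourier convention in $\tau$, and the normalisation of the Jost limits $\chi_i$ must be combined consistently to produce the overall prefactor $\tfrac{1}{2}$ together with the $\tfrac{1}{(2\pi)^2}$ inside $\widehat{X}_i$. I would pin these down by tracking the $2\pi$ and $i$ factors through Stone's formula and the angular normalisation rather than by any new input. A secondary point requiring care is that the lower integration limit is $r_-$ rather than $r_0$: since the theorem concerns the region $r\in(r_-,\infty)$ and the relevant fundamental solution is the one extended in $L^2$ across the Cauchy horizon, the contribution from $(r_0,r_-)$ does not enter, consistently with the extension carried out in Lemma~\ref{lemmaSpec2}.
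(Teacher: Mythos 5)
Your proposal follows essentially the same route as the paper's proof: it combines Proposition~\ref{mainProp} with Lemmas~\ref{lemmaSpec1} and~\ref{lemmaSpec2}, identifies the radial integral against $\Gamma$ with the conserved scalar product of Lemma~\ref{lemmasprod}, and collects the normalisation constants into $\widehat{X}_i(\omega)$ and the matrix $T$ (including the Hermiticity $t_{21}=\overline{t_{12}}$ from the flux relation). The only minor divergence is in justifying the restriction of the $r'$-integral to $(r_-,\infty)$: the paper argues via the boundary condition that waves reflected at $\partial M$ never re-cross the Cauchy horizon, whereas you appeal to the $L^2$-extension structure of the fundamental solutions --- both justifications are heuristic at essentially the same level.
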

\begin{proof}
    We combine the results from Proposition~\ref{mainProp}, Lemma~\ref{lemmaSpec1} and Lemma~\ref{lemmaSpec2} and evaluate the integral over $r'$
    only in the range $(r_-, \infty)$. This is possible because of the boundary conditions the wave is reflected on $\partial M$ and never comes back through the
    Cauchy horizon. Thus all interactions behind the Cauchy horizon do not contribute. Using the expression of the scalar product on the Cauchy hypersurfaces from
    Lemma~\ref{lemmasprod} gives the final expression.
\end{proof}

We finally remark that this integral representation is independent of the choice of the prefactors
of the fundamental solutions. Indeed, as demonstrated in the proof of Lemma~\ref{differenceGreen},
the difference of the Green's matrices depends solely on the quotient of the coefficients of the fundamental solutions.
In this way, the prefactors are drop out.

\Thanks{{{\em{Acknowledgments:}}}
We would like to thank Olaf M\"uller for helpful discussions.
We are grateteful to the referees for the careful reading and valuable feedback.
C.K.\ gratefully acknowledges support by the Heinrich-B\"oll-Stiftung.}

\bibliographystyle{amsplain}

\begin{thebibliography}{10}

\bibitem{bernal+sanchez}
A.N. Bernal and M.~S{\'a}nchez, \emph{On smooth {C}auchy hypersurfaces and
  {G}eroch's splitting theorem},
  \href{https://arxiv.org/abs/gr-qc/0306108}{arXiv:gr-qc/0306108}, Commun.
  Math. Phys. \textbf{243} (2003), no.~3, 461--470.

\bibitem{chandra}
S.~Chandrasekhar, \emph{The {M}athematical {T}heory of {B}lack {H}oles}, Oxford
  Classic Texts in the Physical Sciences, The Clarendon Press Oxford University
  Press, New York, 1998.

\bibitem{chernoff73}
P.R. Chernoff, \emph{Essential self-adjointness of powers of generators of
  hyperbolic equations}, J. Funct. Anal. \textbf{12} (1973), 401--414.

\bibitem{alfaro+regge}
V.~de~Alfaro and T.~Regge, \emph{Potential scattering}, North-Holland
  Publishing Company, Amsterdam, 1965.

\bibitem{kerr}
F.~Finster, N.~Kamran, J.~Smoller, and S.-T. Yau, \emph{Nonexistence of
  time-periodic solutions of the {D}irac equation in an axisymmetric black hole
  geometry}, \href{https://arxiv.org/abs/gr-qc/9905047}{gr-qc/9905047}, Comm.
  Pure Appl. Math. \textbf{53} (2000), no.~7, 902--929.

\bibitem{decay}
\bysame, \emph{Decay rates and probability estimates for massive {D}irac
  particles in the {K}err-{N}ewman black hole geometry},
  \href{https://arxiv.org/abs/gr-qc/0107094}{arXiv:gr-qc/0107094}, Commun.
  Math. Phys. \textbf{230} (2002), no.~2, 201--244.

\bibitem{tkerr}
\bysame, \emph{The long-time dynamics of {D}irac particles in the
  {K}err-{N}ewman black hole geometry},
  \href{https://arxiv.org/abs/gr-qc/000508}{arXiv:gr-qc/0005088}, Adv. Theor.
  Math. Phys. \textbf{7} (2003), no.~1, 25--52.

\bibitem{wdecay}
\bysame, \emph{Decay of solutions of the wave equation in the {K}err geometry},
  \href{https://arxiv.org/abs/gr-qc/0504047}{gr-qc/0504047}, Commun. Math.
  Phys. \textbf{264} (2006), no.~2, 465--503.

\bibitem{sigrn}
F.~Finster and C.~Krpoun, \emph{The fermionic signature operator in the
  {R}eissner-{N}ordstr\"om geometry in {E}ddington-{F}inkelstein coordinates},
  in preparation (2023).

\bibitem{finite}
F.~Finster and M.~Reintjes, \emph{A non-perturbative construction of the
  fermionic projector on globally hyperbolic manifolds {I} -- {S}pace-times of
  finite lifetime}, \href{https://arxiv.org/abs/1301.5420}{arXiv:1301.5420
  [math-ph]}, Adv. Theor. Math. Phys. \textbf{19} (2015), no.~4, 761--803.

\bibitem{chernoff}
F.~Finster and C.~R\"oken, \emph{Self-adjointness of the {D}irac {H}amiltonian
  for a class of non-uniformly elliptic boundary value problems},
  \href{https://arxiv.org/abs/1512.00761}{arXiv:1512.00761 [math-ph]}, Annals
  of Mathematical Sciences and Applications \textbf{1} (2016), no.~2,
  301–--320.

\bibitem{hamilton}
\bysame, \emph{An integral spectral representation of the massive {D}irac
  propagator in the {K}err geometry in {E}ddington-{F}inkelstein-type
  coordinates}, \href{https://arxiv.org/abs/1606.01509}{arXiv:1606.01509
  [gr-qc]}, Adv. Theor. Math. Phys. \textbf{22} (2018), no.~1, 47--92.

\bibitem{schdecay}
F.~Finster and J.~Smoller, \emph{Decay of solutions of the {T}eukolsky equation
  for higher spin in the {S}chwarzschild geometry},
  \href{https://arxiv.org/abs/gr-qc/0607046}{arXiv:gr-qc/0607046}, Adv. Theor.
  Math. Phys. \textbf{13} (2009), no.~1, 71--110, erratum in Adv. Theor. Math.
  Phys. {\bf{20}} (2016), no. 6, 1485--1486.

\bibitem{FSYperiodic}
F.~Finster, J.~Smoller, and S.-T. Yau, \emph{Non-existence of time-periodic
  solutions of the {D}irac equation in a {R}eissner-{N}ordstr\"om black hole
  background}, \href{https://arxiv.org/abs/gr-qc/9805050}{arXiv:gr-qc/9805050},
  J. Math. Phys. \textbf{41} (2000), no.~4, 2173--2194.

\bibitem{goldberg}
J.N. Goldberg, A.J. Macfarlane, E.T. Newman, F.~Rohrlich, and E.C.G. Sudarshan,
  \emph{Spin-{$s$} spherical harmonics and {$\mathchar'26\mkern-12mu
  \partial$}}, J. Math. Phys. \textbf{8} (1967), 2155--2161.

\bibitem{krpoun-mueller}
C.~Krpoun and O.~M\"{u}ller, \emph{Solutions to the {D}irac equation in
  {K}err-{N}ewman geometries including the black-hole region},
  \href{https://arxiv.org/abs/2204.05741}{arXiv:2204.05741 [math.AP]}, J. Geom.
  Phys. \textbf{183} (2023), Paper No. 104689, 19.

\bibitem{lawson+michelsohn}
H.B. Lawson, Jr. and M.-L. Michelsohn, \emph{Spin {G}eometry}, Princeton
  Mathematical Series, vol.~38, Princeton University Press, Princeton, NJ,
  1989.

\bibitem{oneill}
B.~O'Neill, \emph{The {G}eometry of {K}err {B}lack {H}oles}, A K Peters Ltd.,
  Wellesley, MA, 1995.

\bibitem{reed+simon}
M.~Reed and B.~Simon, \emph{Methods of {M}odern {M}athematical {P}hysics. {I},
  {F}unctional analysis}, second ed., Academic Press Inc., New York, 1980.

\end{thebibliography}
\providecommand{\bysame}{\leavevmode\hbox to3em{\hrulefill}\thinspace}
\providecommand{\MR}{\relax\ifhmode\unskip\space\fi MR }
\providecommand{\MRhref}[2]{%
  \href{http://www.ams.org/mathscinet-getitem?mr=#1}{#2}
}
\providecommand{\href}[2]{#2}

\end{document}